%%%%%%%%%%%%%%%%%%%%%%%%%%%%%%%%%%%%%%%%%%%%%%%%%%%%%%%%%%%%%%%%%%%%%%%%%%%%%%%%
%2345678901234567890123456789012345678901234567890123456789012345678901234567890
%        1         2         3         4         5         6         7         8

\documentclass[letterpaper, 10 pt, conference]{ieeeconf}  % Comment this line out if you need a4paper

\IEEEoverridecommandlockouts                              % This command is only needed if
                                                          % you want to use the \thanks command

\overrideIEEEmargins                                      % Needed to meet printer requirements.

%In case you encounter the following error:
%Error 1010 The PDF file may be corrupt (unable to open PDF file) OR
%Error 1000 An error occurred while parsing a contents stream. Unable to analyze the PDF file.
%This is a known problem with pdfLaTeX conversion filter. The file cannot be opened with acrobat reader
%Please use one of the alternatives below to circumvent this error by uncommenting one or the other
\pdfobjcompresslevel=0
\pdfminorversion=4

% See the \addtolength command later in the file to balance the column lengths
% on the last page of the document

% The following packages can be found on http:\\www.ctan.org
\usepackage{graphics} % for pdf, bitmapped graphics files
\usepackage{epsfig} % for postscript graphics files
\usepackage{times} % assumes new font selection scheme installed
\usepackage{amsmath} % assumes amsmath package installed
\usepackage{amssymb}  % assumes amsmath package installed
\usepackage{fancyhdr}
\usepackage[dvipsnames]{xcolor}
\definecolor{antiquefuchsia}{rgb}{0.57, 0.36, 0.51}
\definecolor{DarkGreen}{rgb}{0.57, 0.8, 1}
\DeclareMathOperator{\diag}{diag}

\DeclareMathOperator{\rank}{rank}
\DeclareMathOperator{\Span}{span}
\DeclareMathOperator{\Null}{Null}
\DeclareMathOperator{\blkdiag}{blkdiag}
\newtheorem{thm}{Theorem}

\newtheorem{corollary}{Corollary}
\newtheorem{lem}{Lemma}
\newtheorem{assumption}{Assumption}
\newtheorem{remark}{Remark}
\newtheorem{definition}{Definition}
\newtheorem{problem}{Problem}
\title{\LARGE \bf
{Bearing-only formation control under persistence of excitation
%Bearing-based formation control with leader-follower structure using persistency of excitation
}}

\author{Zhiqi~Tang$^{1,2}$, Rita Cunha$^{1}$, Tarek Hamel$^{2,3}$, Carlos Silvestre$^{1,4}$% <-this % stops a space
\thanks{*This work was partially supported by the Project MYRG2015-00126-FST of the University
	of Macau; by the Macao Science and Technology, Development Fund under Grant FDCT/026/2017/A; by Funda\c{c}\~{a}o  para a Ci\^{e}ncia e a Tecnologia (FCT) through Project UID/EEA/50009/2019 and Project PTDC/EEI-AUT/5048/2014;
	and by the EQUIPEX project Robotex. The work of Z. Tang was supported by FCT through Ph.D. Fellowship PD/BD/114431/2016 under the FCT-IST NetSys Doctoral Program.}% <-this % stops a space
%\thanks{$^1$Institute for Systems and Robotics, Instituto Superior T\'{e}cnico, Universidade de Lisboa, Portugal. $^2$I3S-CNRS, Universit\'{e} C\^{o}te d'Azur, Nice-Sophia Antipolis, France. $^3$I3S-CNRS, Institut Universitaire de France, Nice-Sophia Antipolis, France. $^4$Department of Electrical and Computer Engineering of the Faculty of Science and Technology of the University of Macau, Macao, China.}%
\thanks{$^1$Institute for Systems and Robotics, Instituto Superior T\'{e}cnico, Universidade de Lisboa, Portugal.}
\thanks{$^2$I3S-CNRS, Universit\'{e} C\^{o}te d'Azur, Nice-Sophia Antipolis, France.}
\thanks{$^3$I3S-CNRS, Institut Universitaire de France, Nice-Sophia Antipolis, France.}
\thanks{$^4$Department of Electrical and Computer Engineering of the Faculty of Science and Technology of the University of Macau, Macao, China.}%
    \thanks{{\tt\small Email: zhiqitang@tecnico.ulisboa.pt, rita@isr.tecnico.ulisboa.pt, thamel@i3s.unice.fr, csilvestre@umac.mo}}
}

\begin{document}
\bstctlcite{IEEEexample:BSTcontrol}

\maketitle
%\thispagestyle{fancy}
%\fancyhead[L]{ECC 2018 - European Control Conference, Jun 2018, Limassol, Cyprus.}
%\pagestyle{empty}
%\pagestyle{fancy}
\cfoot{\thepage}
%%%%%%%%%%%%%%%%%%%%%%%%%%%%%%%%%%%%%%%%%%%%%%%%%%%%%%%%%%%%%%%%%%%%%%%%%%%%%%%%
%\vspace{-0.5cm}
\begin{abstract}
This paper addresses the problem of bearing-only formation control in $d~(d\geq 2)$-dimensional space by exploring persistence of excitation (PE) of the desired bearing reference.
By defining a desired formation that is bearing PE, distributed bearing-only control laws are proposed, which guarantee exponential stabilization of the desired formation only up to a translation vector.
The key outcome of this approach relies in exploiting the bearing PE to significantly
relax the conditions imposed on the graph topology to ensure
exponential stabilization, when compared to the bearing rigidity
conditions, and to remove the scale ambiguity introduced by
bearing vectors. Simulation results are provided to illustrate the performance of the proposed control method.

\end{abstract}

%%%%%%%%%%%%%%%%%%%%%%%%%%%%%%%%%%%%%%%%%%%%%%%%%%%%%%%%%%%%%%%%%%%%%%%%%%%%%%%%
\section{INTRODUCTION}
The problem of formation control has been extensively
studied over the last decades both by the robotics and the
control communities. The main categories of solutions for formation control can be classified as \cite{oh2015survey}: 1)  position-based formation control \cite{ren2007distributed},  2) displacement-based formation control \cite{ren2005coordination}, 3) distance-based formation control \cite{anderson2007control} and more recently 4) bearing-based formation \cite{basiri2010distributed}. This latter category control has received growing attention due to its minimal requirements on the sensing ability of each agent. Early works on bearing-based formation control were mainly about controlling the subtended bearing angles that are measured in each agent's local coordinate frame  and were limited to the planar formations only \cite{basiri2010distributed,bishop2011very}. The main body of work however builds on concepts from bearing rigidity theory, which investigates the conditions for which a static formation is uniquely determined up to a translation and a scale given the corresponding constant bearing measurements.  Bearing rigidity theory in 2-dimensional space (also termed parallel rigidity) is explored in \cite{eren2003sensor,servatius1999constraining}. More recently it has been extended to arbitrary dimensions in \cite{zhao2016bearing} and a bearing-only formation control solution, that guarantees convergence to a desired formation that is centroid invariant and scale invariant with respect to the initial conditions of the formation,  is proposed.
By exploiting persistence of excitation (PE) of the
bearing vectors, we proposed in \cite{tang2020bearing,tang2020formationcontrol}:
 1) a relaxed bearing rigidity theory for leader-follower formations, which alleviates the constraints imposed on the graph topology required by the leader-first follower structure defined in \cite{trinh2019bearing}, and 2) bearing control laws achieving exponential stabilization of the leader-follower formation in terms of shape and scale, if the desired formation is bearing persistently exciting (BPE).
%In this paper, we develop a bearing persistently excitation theory, such that, if a formations has a spanning tree and its bearing Laplacian matrix is persistently exciting (PE), then the formations can be uniquely determined up to translation using only inter-agent bearings and velocity measurements. We define the PE condition of a bearing Laplacian matrix by applying persistent excitation on specific inter-agent bearing vectors in order to make the integral of bearing Laplacian matrix have the same properties as graph Laplacian matrix. In this context, PE conditions are explored to significantly relax the conditions on the graph topology required by bearing rigidity and to remove the scale ambiguity introduced by bearing vectors.

This paper provides a coherent generalization of our previous  solution to formations under general undirected graph topologies. Under the assumption that the desired formation is BPE, we propose control laws for a multi-agent system to track a desired formation using only bearing information.
 In particular, we show that under the bearing PE condition: 1) the exponential stabilization of the formation up to a translation  is achieved for any undirected graph that has a spanning tree (not necessarily bearing rigid) as shown in Fig.\ref{fig:pe_edges}-$(a1)$, $(b1)$ and $(b2)$); 2) scale ambiguity, which is a characteristic of bearing rigidity, can be removed without the need to measure the distance between any two agents. The main focus of the paper is pointing out general and explicit
PE conditions whose satisfaction ensures exponential stabilisation of the formation to the desired one in terms of shape and scale.

The body of the paper is organized as follows. Section II presents mathematical background on graph theory and formation control. Section III describes several properties related to bearing persistence of excitation in arbitrary dimensional spaces. Section IV presents a bearing-only formation control law along with stability analysis. Section V shows the performance of the proposed control strategy in two different scenarios. The paper concludes with some final comments in Section VI.
\begin{figure}[!htb]
	\centering
	\includegraphics[width=2.1in]{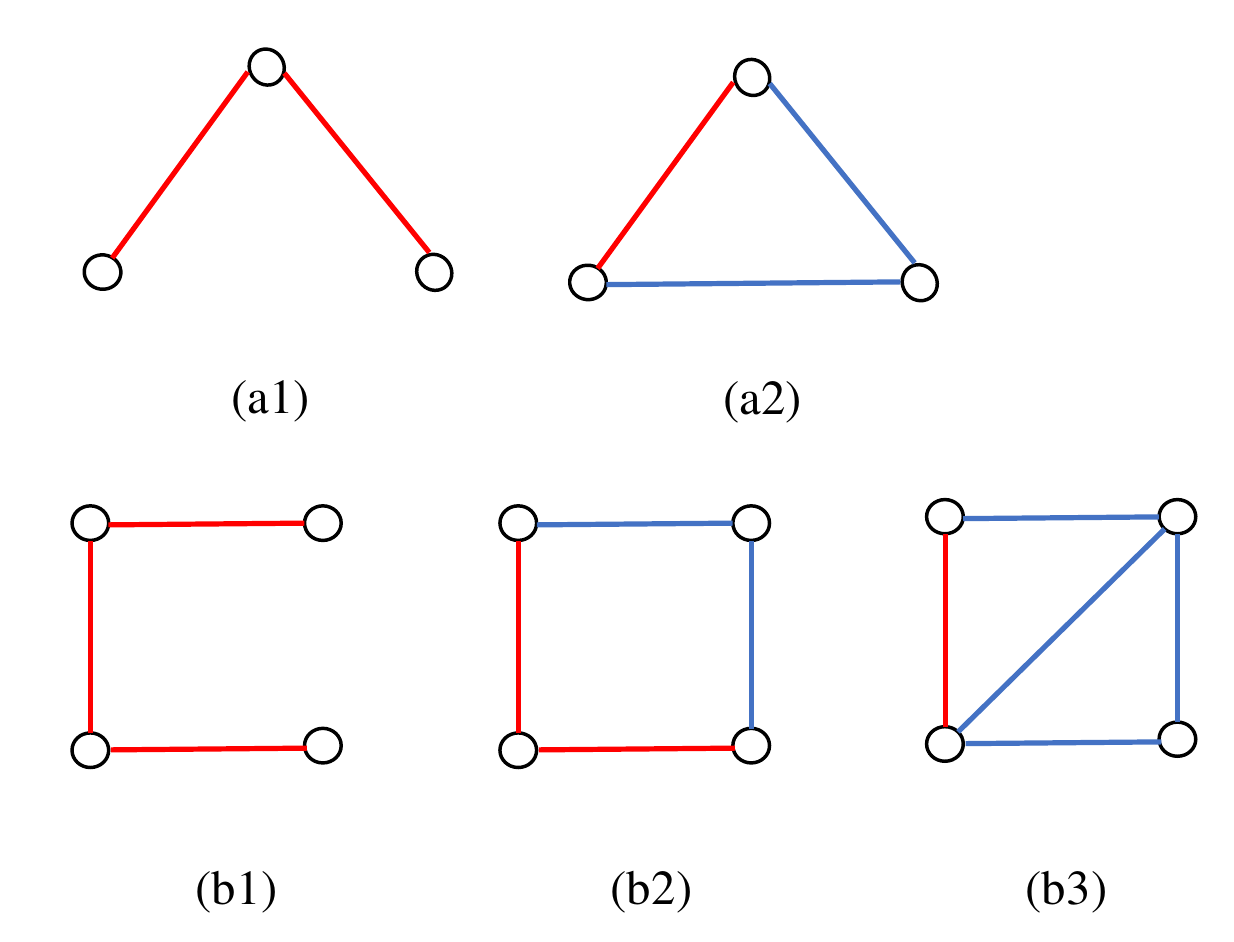}	
	\caption{Examples of bearing persistently exciting formations. Red lines represent edges for which the corresponding bearing vector are persistently exciting and blue lines represent edges for which the corresponding bearing vectors are not necessarily persistently exciting.}
	\label{fig:pe_edges}
\end{figure}
\section{Preliminaries on graph theory and formation control}
\subsection{Notation}
Let $\mathbb{S}^{d-1}:=\{y\in\mathbb{R}^d:\|y\|=1\}$ denote the $(d-1)$-Sphere ($d \geq 2$) and $\|.\|$ the euclidean norm. The null space and rank of a matrix are denoted by $\Null(.)$ and $\rank(.)$, respectively, and $\lambda_{\max}(.)(\lambda_{\min}(.))$ represents the maximum (minimum) eigenvalue of its matrix argument. The matrix $I_d$ represents the identity matrix of dimension $d\times d$. The operator $\otimes$ denotes the Kronecker product, $1_n = [1,\ldots,1]^T \in \mathbb R^{n}$ denotes the column vector of ones, and $\diag(A_i) = \blkdiag \{A_1, \ldots, A_n\} \in \mathbb{R}^{nd\times nd}$ denotes the block diagonal matrix with elements given by $A_i\in \mathbb{R}^{d\times d}$ for $i=1,\ldots,n$.
For any $y\in \mathbb{S}^{d-1}$, we can define the projection operator $\pi_y$
\begin{align}
\pi_y := I_d - y y^{\top} \geq 0, \label{eq:pi}
\end{align}
as the orthogonal projection operator in $\mathbb{R}^d$ onto the $(d-1)$-dimensional $(d\geq 2)$ vector subspace orthogonal to $y$.
\subsection{Graph theory}\label{sec:LF}
Consider a system of $n~(n\geq 2)$  connected agents. The underlying interaction topology can be modeled as an undirected graph $\mathcal{G} := (\mathcal{V}, \mathcal{E})$, where $\mathcal{V}=\{1,\ldots,n\}$ is the set of vertices and $\mathcal{E} \subseteq \mathcal{V} \times \mathcal{V}$ is the set of undirected edges. Two vertices $i$ and $j$ are called adjacent (or neighbors)
when $\{i,j\}\in \mathcal{E}$. The set of neighbors of agent $i$ is denoted by $\mathcal{N}_i:=\{j\in\mathcal{V}|\{i,j\}\in\mathcal{E}\}$. If $j\in\mathcal N_i$, it follows that $i\in \mathcal N_j$, since the edge set in an undirected graph consists of unordered vertex pairs. Define $m=|\mathcal E|$, where $|.|$ denotes the cardinality of a set. A graph $\mathcal G$ is connected if there exists a path between every pair of vertices in $\mathcal G$ and in that case $m \geq n-1$. A graph $\mathcal G$ is said to be acyclic if it has no circuits. A tree is a connected acyclic graph.
A spanning tree of a graph $\mathcal G$ is a tree of $\mathcal G$ having all the vertices of $\mathcal G$.
An orientation of an undirected graph is the assignment of a direction to each edge. An oriented graph is an undirected graph together with an orientation. The incidence matrix $H\in \mathbb{R}^{m\times n}$ of an oriented graph is the $\{0,\pm 1\}$-matrix with rows indexed by edges and columns by vertices: $[H]_{ki}=1$ if vertex $i$ is the head of the edge $k$, $[H]_{ki}=-1$ if it is the tail, and $[H]_{ki}=0$ otherwise. For a connected graph, one always has $H1_n=0$ and rank$(H)=n-1$.
\subsection{Formation control} \label{subsection:formation control}
Consider an undirected graph $\mathcal{G}=(\mathcal{V},\mathcal{E})$, let $p_i  \in \mathbb{R}^d$ denote the position of each agent $i\in \mathcal V$ expressed in an inertial frame common to all agents and $v_i  \in \mathbb{R}^d$ denote the velocity control input, such that $\dot p_i=v_i$. Then, the stacked vector $\boldmath{p}=[p_1^\top,...,p_n^\top]^\top\in \mathbb{R}^{dn}$ is called a configuration of $\mathcal{G}$. The graph $\mathcal{G}$ and the configuration $\boldmath{p}$ together define a formation $\mathcal{G}(p)$ in the d-dimensional space. Let $v:=\dot p=[v_1^\top,\ldots,v_n^\top]^\top\in \mathbb R ^{dn}$ denote the stacked vector of velocity control inputs. For a formation, define the relative position
\begin{equation}\label{eq:pij}
p_{ij}:=p_j-p_i,\ \{i,j\}\in\mathcal E,
\end{equation}
as long as $\|p_{ij}\|\ne 0$, the bearing of agent $j$ relative to agent $i$ is given by the unit vector
\begin{equation}
\label{eq:gij}
g_{ij} := p_{ij}/\|p_{ij}\| \in \mathbb{S}^{d-1}.
\end{equation}
Consider an arbitrary orientation of the graph and denote
$$e_k:= p_{ij},\ k\in \{1,\ldots,m\},$$
as the edge vector with assigned direction such that $i$ and $j$ are, respectively, the initial and the
terminal nodes of $e_k$. Denote the corresponding bearing vector by
$$g_k:= \frac{e_k}{\|e_k\|}\in \mathbb{S}^{d-1},\ k\in \{1,\ldots,m\}.$$
Define the stacked vector of edge vectors $e=[e_1^\top,...,e_m^\top]^\top=\bar{H}p$, where $\bar H=H \otimes I_d$. %and $g=[g_1^\top,...,g_m^\top]^\top$.
\subsubsection{Formation control using relative position measurements}\label{formation control}
In this problem setup, the agents sense relative positions of their neighbors. The formation control objective is to derive the configuration $p$ to the desired one up to translation, i.e. \cite{mesbahi2010graph,oh2015survey}.
The graph Laplacian matrix is introduced as
\begin{equation} \label{eq:Laplacian}
L=\bar H^\top \bar H
\end{equation}
Note that if the graph is connected, or equivalently has a spanning tree, $\rank(L)=dn-d$, $\Null(L)=\Span\{U\}$, with $U=1_n\otimes I_d$. Let $\lambda_i$ denote the $i$th eigenvalue of $L$ under a non-increasing order and note that $\lambda_{dn-d}$ is the smallest positive eigenvalue of $L$.
\subsubsection{Formation control using bearing measurements}
In this setting, the agents measure the relative directions to their neighbors (bearings) and the objective of the formation control is to drive the configuration $p$ to the desired configuration up to a translational and a scaling factor, i.e. \cite{zhao2016bearing,zhao2019bearing}. The bearing Laplacian matrix is introduced as
\begin{equation}\label{eq:bearingL}
L_B=\bar H^\top \Pi\bar H
\end{equation}
where $\Pi=\diag(\pi_{g_k})$. %The bearing Laplacian $L_B$ can be viewed as a weighted graph Laplacian matrix due the weight {\color{red}$c_k\pi_{g_k}$} on each edge.
Since $\Span\{U,p\}\subset \Null(L_B)$ it follows that $\rank(L_B)\le dn-d-1$. According to \cite{zhao2016bearing} (in which only constant bearing are considered), if the formation is infinitesimally bearing rigid then $\rank(L_B)= dn-d-1$ and $\Null(L_B)=\Span\{U,p\}$.
\section {Bearing persistence of excitation in $\mathbb R^d$}
In this section, we describe the concept of  persistence of excitation for bearing vectors and characterize BPE formations (i.e. formations that can be uniquely determined up to a translational factor using only bearing and velocity control inputs information).
	% Under the assumption such that the formation is bearing PE, the concept of relaxed bearing rigidity is proposed, which relaxes the constraints on graph topology required by bearing rigidity theory \cite{zhao2016bearing}.
%Figure~\ref{fig:pe_edges} illustrates this relaxation with three formations $(a1)$, $(b1)$ and $(b2)$ that are not bearing rigid, but these formations can be uniquely determined up to a translation vector in two or three dimensional space provided that specific bearing vectors are persistently exciting.
%we will illustrate the results in following subsections.}
\subsection{Persistence of excitation on bearings}
\begin{definition}
	A positive semi-definite matrix $\Sigma(t)\in \mathbb{R}^{n\times n}$, is called \textit{persistently exciting} (PE) if there exists $T>0$ and $\mu>0$ such that for all $t$
	\begin{equation}
	\int_{t}^{t+T}\Sigma(\tau)d\tau\ge\mu I. \label{eq:pe}
	\end{equation}
	\label{def:pe of matrix}
\end{definition}
\begin{definition}\label{def:pe}
	A direction $y(t)\in \mathbb{S}^{d-1}$, is called \textit{persistently exciting} (PE) if the matrix $\pi_{y(t)}$ satisfies the PE condition from Definition \ref{def:pe of matrix}.
\end{definition}
\begin{lem}\label{lem:pe_norm}
	For a direction $y(t)\in \mathbb{S}^{d-1}$, assume that $\dot y(t)$ is uniformly continuous, then $y(t)$ is PE ($\pi_{y(t)}$ satisfies \eqref{eq:pe}) if and only if:
	
	There exists $(T,\epsilon)>0$ such that $\|\dot y(\tau)\|\ge \epsilon$, $\forall \tau \in [t,\ t+T]$.
	
\end{lem}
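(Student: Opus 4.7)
The plan is to argue each implication by contradiction, exploiting the $2$-Lipschitz dependence $\|\pi_{y_1}-\pi_{y_2}\|\le 2\|y_1-y_2\|$ of the projector on its defining direction and, for the converse, an Arzel\`a--Ascoli compactness argument applied to time-shifted trajectories. (I read the stated condition in the usual PE sense, i.e.\ that there exist $T,\epsilon>0$ such that for every $t$ there is some $\tau^{*}\in[t,t+T]$ with $\|\dot y(\tau^{*})\|\ge\epsilon$.)

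For the direct implication $(\Rightarrow)$, assuming $\pi_{y(\cdot)}$ satisfies \eqref{eq:pe} with constants $(T,\mu)$, I would claim that the same $T$ together with $\epsilon:=\mu/(4T^2)$ already witnesses the stated condition. Indeed, if on some window $[t_0,t_0+T]$ we had $\|\dot y(\tau)\|<\epsilon$ throughout, then $\|y(\tau)-y(t_0)\|\le\epsilon T$ on that window, and the Lipschitz property of $y\mapsto\pi_y$ yields
\[
\int_{t_0}^{t_0+T}\pi_{y(\tau)}\,d\tau \;=\; T\,\pi_{y(t_0)} + E,\qquad \|E\|\le 2\epsilon T^2 < \mu/2.
\]
Testing against the unit vector $y(t_0)\in\Null(\pi_{y(t_0)})$ would then give $y(t_0)^\top\bigl(\int\pi_y\,d\tau\bigr)y(t_0)\le\mu/2$, contradicting the PE bound $\ge\mu$.

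For the converse $(\Leftarrow)$, assume the lower-bound condition with $(T_0,\epsilon)$ and suppose, for contradiction, that $\pi_y$ is not PE. Then along sequences $t_n\in\mathbb{R}$ and unit $v_n\in\mathbb{S}^{d-1}$ we have $\int_{t_n}^{t_n+T_0}\bigl(1-(v_n^\top y(\tau))^2\bigr)\,d\tau\to 0$. A short preliminary step would show $\dot y$ is uniformly bounded: $\|y\|\equiv 1$ combined with uniform continuity of $\dot y$ rules out blow-up, since with a fixed tolerance uniform continuity would force $\dot y$ to stay large over a fixed-length interval, producing an increment of $y$ exceeding the diameter of $\mathbb{S}^{d-1}$. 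Together with equicontinuity, Arzel\`a--Ascoli applied to $y_n(\cdot):=y(\cdot+t_n)$ and $\dot y_n$ yields, along a subsequence, $y_n\to y^{*}$ and $\dot y_n\to \dot y^{*}$ uniformly on $[0,T_0]$, with $v_n\to v^{*}$. Passing to the limit in the integral forces $(v^{*})^\top y^{*}(\tau)\equiv\pm 1$ on $[0,T_0]$, so $y^{*}\equiv\pm v^{*}$ is constant and $\dot y^{*}\equiv 0$. However, the hypothesis provides $\tau_n^{*}\in[t_n,t_n+T_0]$ with $\|\dot y(\tau_n^{*})\|\ge\epsilon$; extracting $\tau_n^{*}-t_n\to\tau^{**}\in[0,T_0]$, uniform convergence gives $\|\dot y^{*}(\tau^{**})\|\ge\epsilon$, contradicting $\dot y^{*}\equiv 0$.

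The main obstacle is the compactness step in the converse, and in particular the preliminary claim that uniform continuity of $\dot y$ combined with boundedness of $y$ on the sphere forces $\dot y$ to be uniformly bounded, a fact not given directly by the hypotheses. Once that is in hand, the subsequential limit is forced to be constant and the assumed lower bound on $\|\dot y\|$ immediately delivers the contradiction; the direct implication, by contrast, reduces to a single matrix-norm estimate using the Lipschitz property of $y\mapsto\pi_y$.
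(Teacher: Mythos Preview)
The paper does not actually prove this lemma; its ``proof'' consists solely of a pointer to \cite[Appendix~6.1]{le2017observers}. There is therefore no in-paper argument to compare your proposal against.

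On its own merits, your argument is essentially correct. The forward direction is a clean one-line estimate: the $2$-Lipschitz bound $\|\pi_{y_1}-\pi_{y_2}\|\le 2\|y_1-y_2\|$ together with $\pi_{y(t_0)}y(t_0)=0$ gives the contradiction you state (the constant $\epsilon=\mu/(4T^2)$ yields $\|E\|\le\mu/2$ rather than a strict inequality, but that is cosmetic). For the converse, the key technical point you flag---that uniform continuity of $\dot y$ together with $\|y\|\equiv 1$ forces $\dot y$ to be uniformly bounded---is correct and can be made precise exactly as you sketch: if $\|\dot y(t_n)\|\to\infty$, uniform continuity keeps $\dot y$ within distance $1$ of $\dot y(t_n)$ on a $\delta$-neighbourhood independent of $n$, so $\|y(t_n+\delta)-y(t_n)\|\ge\delta(\|\dot y(t_n)\|-1)>2$ for large $n$, contradicting the diameter bound on $\mathbb{S}^{d-1}$. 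With that in hand, Arzel\`a--Ascoli on the shifted families $y(\cdot+t_n)$ and $\dot y(\cdot+t_n)$ over $[0,T_0]$ delivers the limit $y^{*}$ with $\dot y^{*}$ equal to the uniform limit of the derivatives, and the vanishing integral forces $y^{*}$ constant; the persisting lower bound $\|\dot y(\tau_n^{*})\|\ge\epsilon$ then survives in the limit and yields the contradiction.

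Your explicit reinterpretation of the quantifier (``for every $t$ there exists $\tau\in[t,t+T]$'' rather than the literal ``for all $\tau\in[t,t+T]$'') matches the intended PE-type statement in the cited reference and is the only reading under which the equivalence holds.
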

\begin{proof}
	The proof of this lemma is given in \cite[Appendix 6.1]{le2017observers}.
\end{proof}
%\begin{lem}\label{lem:Q_norm}
%	Let $Q:=\sum_{i=1}^{l}\pi_{y_i}$. Then the matrix is persistently exciting, if one of the following conditions is satisfied:
%	\begin{enumerate}
%		\item there is at least one of the directions $y_i$ is persistently exciting,
%		\item there are at least two direction $y_{i}$ and $y_j$, $i,j\in\{1,...,l\},\ i\ne j$ such that they are uniformly non-collinear. That is, for all $t\ge 0$ there exists an $\epsilon_1>0$ such that $|y_i(t)^\top y_j(t)|\le 1-\epsilon_1$.
%	\end{enumerate}
%\end{lem}
%\begin{proof}
%	The proof of this lemma is given in \cite[Lemma 3]{le2017observers}.
%\end{proof}

\subsection{BPE formation and relaxed bearing rigidity}
Now we introduce a relaxed persistence of excitation condition specifically developed to characterize the bearing Laplacian matrix.
\begin{definition}
	Consider the Laplacian $L$ and the bearing Laplacian $L_B$ defined in \eqref{eq:Laplacian} and \eqref{eq:bearingL}, respectively. The bearing Laplacian matrix is called \textit{persistently exciting} (PE) if  there exists $T>0$ and $\mu>0$  such that for all $t$
	\begin{equation}
	\int_{t}^{t+T}L_B(\tau)d\tau\ge\mu L. \label{eq:pe_LB}
	\end{equation}
	\label{def:pe of LB}
\end{definition}
\begin{remark}
	One can verify that the PE condition for the bearing Laplacian introduced in Definition \ref{def:pe of LB} is less restrictive than the PE condition from Definition \ref{def:pe of matrix}. In particular, having a matrix $\Pi$ that is PE is a sufficient but not necessary condition to ensure that $L_B = \bar{H}^\top\Pi\bar{H}$ is also PE.
\end{remark}
\begin{definition}\label{bpe}
	A formation $\mathcal G(p(t))$ is \textit{bearing persistently exciting} (BPE) if $\mathcal G$ has a spanning tree and its bearing Laplacian matrix is PE.
\end{definition}
\begin{thm}\label{thm:shape}
	Consider a formation $\mathcal G (p(t))$ defined in $\mathbb R ^{d}$ along with bearing measurements $\{g_k\}_{ k\in\{1\ldots m\}}$ of an arbitrary orientation of the graph. Assume that the velocity control inputs $\{v_i\}_{ i\in \{1 \ldots n\}}$ are bounded and known. If the formation $\mathcal G (p(t))$ is BPE then the configuration $p(t)$ can be recovered up to a translational  vector in $\mathbb R ^{d}$.
\end{thm}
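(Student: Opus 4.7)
My plan is to prove the statement by constructing a distributed observer driven only by the bearing measurements and the known velocity inputs, and to show that its estimation error converges exponentially to the translation subspace $\Span\{U\}$. For a gain $k>0$ I would take the observer
\begin{equation*}
\dot{\hat{p}} = v - k\,L_B(t)\,\hat{p},
\end{equation*}
which at the agent level reads $\dot{\hat{p}}_i = v_i - k\sum_{j\in\mathcal{N}_i}\pi_{g_{ij}}(\hat{p}_i-\hat{p}_j)$ and therefore only uses locally measurable or communicable information. Since $\pi_{g_k}e_k=0$ by construction, one has $L_B(t)\,p(t) = \bar{H}^\top\Pi(t)\,e(t)=0$ for all $t$, so subtracting $\dot{p}=v$ gives the linear time-varying error system
\begin{equation*}
\dot{\tilde{p}} = -k\,L_B(t)\,\tilde{p},\qquad \tilde{p}:=\hat{p}-p.
\end{equation*}

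The next step would exploit the structural inclusion $\Span\{U\}\subset\Null(L_B(t))$, which follows from $\bar{H}U=(H1_n)\otimes I_d=0$. Decomposing $\tilde{p}=P_U\tilde{p}+\tilde{p}_\perp$ with $P_U$ the orthogonal projector onto $\Span\{U\}$, the component $P_U\tilde{p}$ is an invariant of the dynamics and represents precisely the translational ambiguity explicitly allowed by the statement, whereas $\tilde{p}_\perp\in\Span\{U\}^\perp$ must be shown to vanish exponentially.

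For this I would take $V(t)=\tfrac{1}{2}\|\tilde{p}(t)\|^{2}$ as a Lyapunov candidate. Its derivative $\dot{V}=-k\,\tilde{p}^\top L_B(t)\tilde{p}\le 0$ yields monotonicity and boundedness. Integrating over a window of length $T$ and invoking the BPE condition \eqref{eq:pe_LB} gives
\begin{equation*}
\int_{t}^{t+T}\tilde{p}^\top L_B(\tau)\tilde{p}\,d\tau \;\ge\; \mu\,\tilde{p}^\top L\,\tilde{p}\;\ge\;\mu\,\lambda_{dn-d}\,\|\tilde{p}_\perp\|^{2},
\end{equation*}
where the last inequality uses that $L$ is coercive on $\Span\{U\}^\perp$ with smallest positive eigenvalue $\lambda_{dn-d}$, together with $P_U\tilde{p}\in\Null(L)$. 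Combined with the monotonicity of $V$, a classical LTV/PE stability argument should then produce a contraction estimate of the form $V(t+T)\le(1-\alpha)V(t)$ for some $\alpha\in(0,1)$ along the perpendicular component, hence global exponential convergence of $\tilde{p}_\perp$ to zero; the invariance of $P_U\tilde{p}$ then concludes that $\hat{p}(t)$ recovers $p(t)$ up to a translation vector in $\Span\{U\}\cong\mathbb{R}^d$. The step I expect to be the main obstacle is exactly this adaptation of the PE lemma: its textbook version requires an integral lower bound of the form $\int A\ge\mu I$, whereas here $\int L_B\ge \mu L$ is degenerate along $\Span\{U\}$, so the contraction has to be carried out only on the flow-invariant subspace $\Span\{U\}^\perp$, which is where the relaxed Definition \ref{def:pe of LB} of bearing PE is essential.
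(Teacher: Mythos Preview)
Your proposal is correct and follows essentially the same route as the paper: the paper constructs the identical observer (with $k=1$), uses $L_B(t)p(t)=0$ to obtain the error system $\dot\zeta=-L_B(t)\zeta$ for $\zeta:=\tilde p-\frac{1}{n}UU^\top\tilde p(0)$ (which coincides with your $\tilde p_\perp$ since $U^\top\tilde p$ is conserved), checks that $U^\top\zeta\equiv0$, and then invokes the BPE bound $x^\top\!\int_t^{t+T}L_B\,d\tau\,x\ge\mu\lambda_{dn-d}$ on $\Span\{U\}^\perp$ together with the argument of \cite[Lemma~5]{loria2002uniform} to conclude UGE stability of $\zeta=0$. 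The one place to tighten your write-up is the displayed inequality: as written it reads as if the time-varying $\tilde p(\tau)$ were pulled outside the integral, whereas the PE bound applies to a \emph{frozen} vector and the passage from this to the contraction $V(t+T)\le(1-\alpha)V(t)$ on $\|\tilde p_\perp\|^2$ is exactly the Loria-type manipulation you flag as the main obstacle---the paper does not spell it out either, but simply cites that lemma.
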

\begin{proof}
	Consider the stacked velocity vector $v(t)=[v_1^{\top}(t),...,v_n^{\top}(t)]^\top\in \mathbb{R}^{dn}$ and  let ${\hat p}$ denote the estimate of $ p$ with dynamics:
	\begin{equation} \label{eq:observer}\small
	\begin{aligned}
	\dot{\hat p}=v-L_B(t)\hat p,
	\end{aligned}
	\end{equation}
	with arbitrary initial conditions.
	Consider the error variable $\zeta=\hat p(t)-p(t)-\frac{UU^\top(\hat p(0)-p(0))}{n}$ (recall that $U=1_n\otimes I_d$), the corresponding dynamics can be obtained from \eqref{eq:observer}:
	\begin{equation} \label{eq:observer_dynamics}
	\dot{\zeta}=-L_B(t) \zeta.
	\end{equation}
	Due to the fact that $U^\top (\dot{\hat p}(t) -\dot p(t))\equiv 0$, one has $U^\top\zeta(t)\equiv0$. Since the formation is BPE, one has $\forall x\in S^{nd-1}$ satisfying $U^\top x\equiv0$, there exists a $T>0$ and $\mu >0$ such that, $\forall t\ge 0$, $x^\top \int_t^{t+T}L_B(\tau)d \tau x\ge \mu x^\top \bar H^\top \bar Hx\ge \mu \lambda_{dn-d}$, where $\lambda_{dn-d}>0$ is the smallest positive eigenvalue of $\bar H^\top \bar H$ (see Sect. \ref{formation control}). Using a similar  arguments as the proof of \cite[Lemma 5]{loria2002uniform}, one can conclude that the equilibrium $\zeta=0$ is uniformly globally exponentially (UGE) stable. Then we can conclude that $\hat{p}$ converges UGE to the unique $p$ up to a translational vector $\frac{UU^\top(\hat p(0)-p(0))}{n}$.
\end{proof}
\begin{remark}
	Note that for a BPE formation, the shape and the size of the formation may be time-varying. This includes similarity transformations (a combination of rigid transformation and scaling) involving a time-varying rotation. In this case, it is straightforward to show that for any bearing formation the bearing measurements are invariant to translation and scaling but change with rotation such that $g_{ij}(t)=R(t)^\top g_{ij}(0)\forall (i,j)\in\mathcal{E}$ (with $R(t) \in SO(3)$ the rotation part of the similarity transformation).
	This implies that there exists a similarity transformation in which $R(t)$ is time-varying such that the formation $\mathcal G(p(t))$ is BPE.
\end{remark}
\begin{definition}
	A formation $\mathcal G(p(t))$ is called \textit{relaxed bearing rigid}, if it is bearing PE and subjected to a similarity transformation.
\end{definition}
\begin{corollary}
	If the formation is relaxed bearing rigid, then the result of Theorem \ref{thm:shape} applies.
\end{corollary}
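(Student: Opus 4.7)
The plan is to observe that the corollary is essentially a direct application of Theorem \ref{thm:shape}, once one unpacks the definition of relaxed bearing rigidity. By the definition given immediately above the corollary, a relaxed bearing rigid formation is, by construction, a BPE formation (subjected additionally to a similarity transformation). Consequently, provided the remaining hypotheses of Theorem \ref{thm:shape} are satisfied, the conclusion follows verbatim, and $p(t)$ can be recovered up to a translational vector by running the observer \eqref{eq:observer}.

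First I would check the remaining assumptions of Theorem \ref{thm:shape}, namely that the velocity inputs $v_i(t)$ are bounded and known. Using the representation of the similarity transformation described in the remark, each position can be written as $p_i(t)=s(t)R(t)p_i(0)+b(t)$ for some scaling $s(t)$, rotation $R(t)\in SO(d)$, and translation $b(t)$. Differentiating yields $v_i(t)=\dot s(t)R(t)p_i(0)+s(t)\dot R(t)p_i(0)+\dot b(t)$, which is uniformly bounded under the mild assumption that $\dot s$, $\dot R$ and $\dot b$ are uniformly bounded. Since each agent executes its own control law, $v_i(t)$ is also known locally, so both hypotheses of Theorem \ref{thm:shape} hold.

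With these verifications in place, I would simply invoke Theorem \ref{thm:shape}: the error $\zeta(t)$ defined from \eqref{eq:observer_dynamics} converges uniformly globally exponentially to zero, so $\hat p(t)\to p(t)+\frac{UU^\top(\hat p(0)-p(0))}{n}$, i.e. $p(t)$ is recovered up to a translation in $\mathbb{R}^d$.

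The corollary itself is largely a bookkeeping statement; the main obstacle, if any, lies not in the argument but in justifying that a similarity transformation with time-varying rotation indeed renders the bearing Laplacian PE in the sense of Definition \ref{def:pe of LB}. This, however, has already been argued in the preceding remark (bearings rotate with $R(t)$, so a nontrivial $\dot R(t)$ yields $\|\dot g_k(t)\|\ge\epsilon$ and, via Lemma \ref{lem:pe_norm}, PE of each individual bearing), and therefore does not need to be redone here. Hence the proof reduces to citing the definition and applying Theorem \ref{thm:shape}.
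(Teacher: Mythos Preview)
Your proposal is correct and matches the paper's own approach: the paper's proof is literally ``The proof is analogous to the proof of Theorem~\ref{thm:shape}. It is omitted here for the sake of brevity,'' so the corollary is indeed just the observation that a relaxed bearing rigid formation is, by definition, BPE, whence Theorem~\ref{thm:shape} applies directly. If anything, you supply more detail than the paper does---the explicit check that $v_i(t)$ is bounded under the similarity-transformation representation is a nice touch that the paper simply omits.
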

\begin{proof}
	The proof is analogous to the proof of Theorem \ref{thm:shape}. It is omitted here for the sake of brevity.
\end{proof}

%\begin{remark}
%	For the static case where $\dot{g}_k=0,\ \forall k\in \{1,\ldots,m\}$, it is obviously to conclude that the formation is not bearing persistent. However, if a formation $\mathcal G(p(t))$ with static bearing vectors is infinitesimal bearing rigid that is $\rank (B(t))=\rank (R(p(t)))=dn-d-1$ as described in \cite{zhao2016bearing} and the unique shape of the formation up to a translational factor can still be guaranteed if any distance $d_{ij}=\|p_i-p_j\|,\ \{i,j\}\in \mathcal E$ is provided. Under the proposed controller, which will be defined in the next section, the formation will converge to the desired shape up to a translational and a scaling factor as discussed in \cite{zhao2016bearing}. The next lemma will connect the bearing persistent properties with infinitesimal bearing rigidity defined in \cite{zhao2016bearing}.
%\end{remark}
%\begin{definition}
%	A formation $\mathcal G (p(t))$ is uniformly bearing rigid if $\forall t,\ \rank(\tilde{R}(t))=dn-d-1.$
%\end{definition}
\subsection{Properties of BPE formations}
We explore here the relationship between the number of PE bearings in a formation and guaranteeing that a formation is BPE. More precisely, we focus on the specific bearing vectors in the formations that have to be PE in order to guarantee that a formation is BPE .
\begin{lem} \label{lem:rigid}
	Consider a formation $G(p(t))$ defined in $ \mathbb R^{d}$, If $\forall t\ge 0$, $\rank(L_B(t))=dn-d-1$, then $\Null(L_B(t))=\Span\{U,\ p(t)\}$.
	
\end{lem}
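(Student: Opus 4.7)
The plan is to combine the two facts already stated in the paper just before Theorem~\ref{thm:shape}: the general inclusion $\Span\{U, p(t)\}\subset \Null(L_B(t))$, and a dimension count based on the rank hypothesis. Since these are the only ingredients needed, the proof is essentially a two-line verification, with the only subtle point being to confirm that $U$ and $p(t)$ together span a $(d{+}1)$-dimensional subspace.

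First, I would recall from the discussion of \eqref{eq:bearingL} that for any configuration, $\bar H U = 0$ (because $H1_n=0$) and $\Pi \bar H p = 0$ (because each diagonal block $\pi_{g_k}$ annihilates the corresponding edge vector $e_k$, which is parallel to $g_k$). Hence $\Span\{U,p(t)\}\subset \Null(L_B(t))$ at every time. By the rank–nullity theorem, the hypothesis $\rank(L_B(t)) = dn - d - 1$ gives
\begin{equation*}
\dim \Null(L_B(t)) = dn - (dn - d - 1) = d + 1.
\end{equation*}

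Next, I would show that the candidate subspace $\Span\{U, p(t)\}$ also has dimension exactly $d+1$, which together with the inclusion will force equality. The matrix $U = 1_n\otimes I_d$ has $d$ linearly independent columns, so it suffices to show $p(t)\notin\range(U)$. If instead $p(t) = U\xi = 1_n\otimes \xi$ for some $\xi\in\mathbb R^d$, then all agents would share the same position, every edge vector $e_k = p_{ij}$ would vanish, and the bearings $g_k$ in \eqref{eq:gij} would be undefined. Since the hypothesis that $L_B(t)$ has rank $dn-d-1$ presupposes that the bearing Laplacian is well defined (i.e.\ all $g_k$ exist), this degenerate case is excluded, hence $p(t)\notin\range(U)$ and $\dim\Span\{U,p(t)\} = d+1$.

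Combining the inclusion $\Span\{U,p(t)\}\subset\Null(L_B(t))$ with the matching dimensions $d+1$ on both sides yields $\Null(L_B(t)) = \Span\{U, p(t)\}$, as claimed. The main (very mild) obstacle is simply the linear independence of $p(t)$ from the columns of $U$; everything else is a direct consequence of facts already stated in Section~\ref{subsection:formation control}.
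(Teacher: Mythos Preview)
Your proof is correct and is precisely the standard rank--nullity argument that underlies \cite[Theorem~4]{zhao2016bearing}, which is what the paper simply cites for this lemma. You have spelled out what the paper leaves as a reference, including the only nontrivial point (that $p(t)\notin\range(U)$ whenever the bearings are well defined), so the approaches coincide.
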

\begin{proof}
	The proof is same as stated in \cite[Theorem 4]{zhao2016bearing} since \cite[Theorem 4]{zhao2016bearing} is still valid for time-varying cases.
	%	Since $\rank(L_B(t))=dn-d-1, \ \forall t$, using a similar proof of \cite[Theorem 4]{zhao2016bearing}, we can conclude that $\Null(L_B(t))=\Span\{U,\ p(t)\},\ \forall t$, in which $u$ and $p(t)$ correspond to a rigid-body translation and a scaling of the formation, respectively, for every time instant.
\end{proof}

\begin{lem} \label{lem:rigid to persistent}
	For a formation $\mathcal G(p(t))$ defined in $\mathbb R^d$, assume $\rank(L_B(t))=dn-d-1, \ \forall t\ge 0$. Then $\mathcal G(p(t))$ is bearing persistently exciting if and only if at least one bearing $g_k(t),\ k\in\{1,\ldots,m\}$ is PE.
\end{lem}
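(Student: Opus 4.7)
The plan is to split the if-and-only-if into its two directions, both of which rest on the pointwise identification $\Null(L_B(t))=\Span\{U,p(t)\}$ provided by Lemma \ref{lem:rigid}. Throughout I would work in $U^\perp$, on which $L$ is strictly positive definite so that Definition \ref{def:pe of LB} reduces to a uniform PE-type inequality for $L_B(t)|_{U^\perp}$ whose pointwise kernel is the single direction $\tilde p(t):=(I-UU^\top/n)\,p(t)$.

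For the sufficiency direction, suppose that a particular bearing $g_{k_0}$ is PE. Writing $L_B(t)=\sum_{k=1}^m h_k^\top \pi_{g_k(t)} h_k$, with $h_k\in\mathbb{R}^{d\times nd}$ the block row of $\bar H$ associated with edge $k$, and applying Definition \ref{def:pe} to $g_{k_0}$ immediately yields
\begin{equation*}
\int_t^{t+T}L_B(\tau)\,d\tau\;\ge\;\mu\,h_{k_0}^\top h_{k_0}.
\end{equation*}
The main step is to upgrade this one-edge bound to $\mu' L$ on $U^\perp$. I would argue as follows: on $U^\perp$, the only direction $x$ in which $x^\top L_B(\tau)x$ can vanish at a given time $\tau$ is $x\parallel \tilde p(\tau)$; the PE of $g_{k_0}$ together with Lemma \ref{lem:pe_norm} guarantees that the bearings cannot be jointly frozen on any interval of length $T$, and hence, under the constant-rank hypothesis (which ties the configuration to its bearings up to translation and scale), $\tilde p(\tau)$ cannot remain parallel to a single fixed direction throughout such an interval. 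A compactness argument on the unit sphere of $U^\perp$, combined with the already established lower bound $\int L_B\ge\mu h_{k_0}^\top h_{k_0}$, would then deliver a uniform $\mu'>0$ for which $\int L_B\ge \mu' L$.

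For the necessity direction, I would proceed by contraposition: assume no bearing is PE and show $\mathcal G(p(t))$ is not BPE. By Lemma \ref{lem:pe_norm}, failure of PE means arbitrarily long windows on which $\|\dot g_k\|$ is arbitrarily small. The constant-rank hypothesis rigidly couples the bearings so that any joint interval of slowness of the $\dot g_k$ is also an interval on which $\tilde p(\tau)$ stays essentially confined to a fixed line $\Span\{\tilde p_0\}$ (shape frozen up to scale). Choosing $x=\tilde p_0$ then produces $x^\top L_B(\tau)x\approx 0$ throughout the interval while $x^\top L x>0$, violating the BPE inequality for any prescribed $(T,\mu)$.

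The main obstacle is, in both directions, the passage between statements about individual bearings and statements about the aggregated quadratic form $L_B$ versus $L$. PE of the single edge $k_0$ only produces a lower bound along the range of $h_{k_0}$, so the sufficiency step requires ruling out persistent alignment of $\tilde p(t)$ with any direction, which is where the constant-rank rigidity coupling is essential. In the necessity direction, the analogous delicate step is extracting a common window of simultaneous slowness for all bearings, again using Lemma \ref{lem:rigid} in an essential way.
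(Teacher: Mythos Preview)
Your route differs substantially from the paper's, and in ways that create real gaps.

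\textbf{What the paper does.} The paper never invokes Lemma~\ref{lem:pe_norm} or the motion of $\tilde p(t)$. Both directions are handled by working directly with the quadratic form. For the ``if'' part, the paper fixes a test configuration $\mathring p$ drawn from a set $S$ of infinitesimally bearing rigid configurations (on which every $\|w_k\|=\|\mathring p_i-\mathring p_j\|\ge\varepsilon$), sets $w=\bar H\mathring p$, and writes
\[
w^\top\!\!\int_t^{t+T}\!\!\Pi(\tau)\,d\tau\,w \;=\;\sum_{k}w_k^\top\!\!\int_t^{t+T}\!\!\pi_{g_k}\,d\tau\,w_k \;\ge\; \mu_q\|w_q\|^2
\]
straight from Definition~\ref{def:pe} applied to the PE bearing $g_q$. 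The ratio $\|w_q\|^2/\|w\|^2$ is then bounded below using the $\varepsilon$-separation, giving the BPE inequality. For the ``only if'' part the paper proceeds by contradiction at the level of the same quadratic form: negating PE for every $g_k$ produces a test $w=\bar H\mathring p$ and a time $t$ for which $\sum_k w_k^\top\int\pi_{g_k}w_k<\mu\|w\|^2$, contradicting BPE.

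\textbf{Where your plan diverges and where it breaks.} You route both directions through Lemma~\ref{lem:pe_norm}, converting PE into statements about $\|\dot g_k\|$ and then into statements about whether $\tilde p(\tau)$ stays parallel to a fixed line. This imports an extra hypothesis (uniform continuity of $\dot g_k$, required by Lemma~\ref{lem:pe_norm}) that the lemma under proof does not assume, and it replaces an algebraic estimate by a dynamical one that you do not close. Concretely, in the sufficiency step your inequality $\int L_B\ge\mu\,h_{k_0}^\top h_{k_0}$ says nothing about test vectors $x$ with $h_{k_0}x=0$; the ``compactness on the unit sphere of $U^\perp$ plus motion of $\tilde p$'' sketch does not by itself yield a $\mu'>0$ that is uniform in $t$, because compactness only handles the $x$-variable while uniformity in $t$ is exactly the content of PE you are trying to establish. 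The paper avoids this entirely by keeping the PE edge term $\mu_q\|w_q\|^2$ in the sum and using the rigidity set $S$ to control $\|w_q\|/\|w\|$, rather than trying to propagate a rank-$d$ lower bound to a rank-$(dn-d)$ one via the trajectory of $\tilde p(t)$.

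In the necessity direction you correctly flag the obstruction: non-PE of each $g_k$ gives, a priori, a different window for each $k$, and your sketch offers no mechanism to synchronize them. The paper's contradiction argument is phrased at the quadratic-form level and picks a single test $w$ and time $t$ simultaneously for all $k$, rather than passing through $\|\dot g_k\|$ at all.
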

\begin{proof}
	Since $\rank(L_B(t))=dn-d-1, \ \forall t \ge 0$, $\mathcal G$ has a spanning tree and $\Null(\Pi(t))=\Span\{\bar H p(t)\}$ (from Lemma \ref{lem:rigid}). In order to prove that the formation is BPE, it suffices to show that its bearing Laplacian matrix is PE. Let $S=\{\mathring p \in S| \mathring p=[\mathring p_1^\top,\ldots,\mathring p_n^\top]\in\mathbb R^{dn}\}$ be the set of all possible fixed configurations under the formation $\mathcal G(\mathring p)$ leading to $\rank(L_B)=dn-d-1$. %{\color{blue}Let $\mathring g_k,\ k\in\{1,\ldots,m\}$ be the corresponding bearing vectors under an arbitrary orientation of the graph and we can conclude that $\mathring g_k,\ k\in\{1,\ldots,m\}$ are well defined.}
	This in turn implies that there exists a $w=[w_1^\top,\ldots, w_k^\top,\ldots,w_m^\top]^\top=\bar H\mathring p$ and a positive constant $\epsilon$ such that $\|w_k\|=\|\mathring p_i-\mathring p_j\|\ge \epsilon, \ \forall k \in \{1,\ldots, m\}$ (i.e. $\mathring g_k=\frac{w_k}{\|w_k\|} $ is well define $\forall k \in \{1,\ldots, m\}$).
	%Moreover and from Lemma \ref{lem:rigid}, it is straightforward to verify that {\color{blue}if $\mathring p$ is time-varying such that $\mathring p=p(t)$}, then the vector $w$ is the unique vector in the null space of the matrix $\Pi$.
	
	Now to prove the 'if' part of the lemma we use the fact that there exists at least one bearing vector $g_q,\ q\in\{1,\ldots,m\}$ is PE. This implies that there exist two constant $T>0$, $\mu_q>0$ such that $\forall t\ge 0$ and for all fixed $\mathring p\in S$ leading to $w=\bar H \mathring p$, we have
	\begin{equation}\small
	\begin{aligned}
		w^\top \int_t^{t+T}\Pi(\tau)d\tau w=\sum_{k=1}^m c_k w_k^\top \int_t^{t+T}\pi_{g_k(\tau)}d\tau w_k \ge c_q\mu_q\|w_q\|^2.
	\end{aligned}
	\end{equation}
By choosing $0<\mu <c_q\mu_q\frac{\|w_q\|^2}{\|w\|^2}$, one gets $\mathring p ^\top \bar H^\top \int_t^{t+T}\Pi(\tau)d\tau \bar H\mathring p\ge \mu \mathring p ^\top \bar H^\top\bar H\mathring p$ which implies that $L_B(t)$ is PE.

Now to prove the 'only if', we will proceed by contradiction. Assume that none of the bearing vector is PE which implies that $\forall \mu_k>0$, $\forall T>0$, $\exists t$ and $\exists w=\bar H \mathring p$, such that $w_k^\top \int_t^{t+T}\pi_{g_k(\tau)}d\tau w_k< \mu_k\|w_k\|^2,\ \forall k\in\{1,\ldots,m\}$. Since $L_B(t)$ is PE, there exists $T>0$ and $\mu>0$ such that, $\forall t$ and $\forall w=\bar H \mathring p$, $w^\top \int_t^{t+T}\Pi(\tau)d\tau w\ge \mu\|w\|^2$. Choose $c_k\mu_k\le \frac{\mu\|w\|^2}{m\|w_k\|^2}$, we can conclude that, $\exists t$ and $\exists w=\bar H \mathring p$
\begin{equation}\small
w^\top \int_t^{t+T}\Pi(\tau)d\tau w=\sum_{k=1}^m c_kw_k^\top \int_t^{t+T}\pi_{g_k(\tau)}d\tau w_k < \mu\|w\|^2
\end{equation}
which yields a contradiction.
\end{proof}
%\begin{lem}
%	If an static formation $\mathcal G(p)$ is infinitesimal bearing rigid, change the position of agent $i, \ i \in \mathcal V$ to $p_i^{'}(t)$ such that there exits at least one bearing vector $g_{ij}^{'}(t)=\frac{p_j-p_i^{'}(t)}{\|p_j-p_i^{'}(t)\|},\ j\in \mathcal N_i$ is PE, then the new formation $\mathcal G(p^{'}(t))$ with the new configuration $p^{'}(t)=[p_1^\top,\ldots,p_i^{'}(t),\ldots,p_n^\top]^\top$ is uniformly bearing persistent.
%\end{lem}
\begin{lem}\label{lem:min_pe}
	Consider a formation $\mathcal{G}(p(t))$ defined in $\mathbb R^{d}$. Assume $\mathcal G$ is acyclic and has a spanning tree, then the formation is BPE if and only if $g_k(t)$ satisfies the PE condition for all $k\in\{1,\ldots,m\}$.
\end{lem}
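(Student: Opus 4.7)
The plan is to exploit the tree structure of $\mathcal G$, which forces $m=n-1$ and endows the incidence matrix with full row rank, so that the map $x\mapsto \bar H x$ is surjective onto $\mathbb R^{dm}$. This removes the subtlety present in Lemma \ref{lem:rigid to persistent}, where one could only test $L_B$ against vectors of the form $\bar H\mathring p$, and upgrades the Laplacian-type PE inequality to a standard PE inequality on the block-diagonal matrix $\Pi(t)$.

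First I would observe that an acyclic graph with a spanning tree on $n$ vertices is itself a tree with exactly $m=n-1$ edges, so the incidence matrix $H\in\mathbb R^{(n-1)\times n}$ has $\rank(H)=n-1=m$ and therefore has full row rank. Consequently $\bar H=H\otimes I_d\in\mathbb R^{d(n-1)\times dn}$ has full row rank $dm$, i.e. $\bar H$ is surjective onto $\mathbb R^{dm}$.

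For the ``if'' direction, I would assume each $g_k$ is PE so that, by Definition \ref{def:pe}, for every $k$ there exist $T_k,\mu_k>0$ with $\int_t^{t+T_k}\pi_{g_k(\tau)}d\tau\ge \mu_k I_d$. Choosing a common $T$ (e.g. the maximum) and $\mu_0=\min_k\mu_k$, the block-diagonal structure of $\Pi=\diag(\pi_{g_k})$ immediately yields $\int_t^{t+T}\Pi(\tau)d\tau\ge \mu_0 I_{dm}$. Sandwiching with $\bar H$ gives
\begin{equation*}
\int_t^{t+T}L_B(\tau)d\tau=\bar H^{\top}\!\left(\int_t^{t+T}\Pi(\tau)d\tau\right)\!\bar H\ge \mu_0\bar H^{\top}\bar H=\mu_0 L,
\end{equation*}
which is the PE condition of Definition \ref{def:pe of LB}; together with the spanning-tree hypothesis this establishes BPE (Definition \ref{bpe}).

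For the ``only if'' direction, I would assume $L_B$ is PE and unpack the inequality $\int_t^{t+T}L_B(\tau)d\tau\ge \mu L$ by testing it on arbitrary $x\in\mathbb R^{dn}$, obtaining
\begin{equation*}
(\bar H x)^{\top}\!\left(\int_t^{t+T}\Pi(\tau)d\tau\right)\!(\bar H x)\ge \mu\|\bar H x\|^2.
\end{equation*}
Here is where the tree hypothesis does all the work: since $\bar H$ is surjective, as $x$ ranges over $\mathbb R^{dn}$ the vector $y=\bar H x$ ranges over all of $\mathbb R^{dm}$, so $\int_t^{t+T}\Pi(\tau)d\tau\ge \mu I_{dm}$. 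Restricting to test vectors $y$ supported on the $k$-th block of $\mathbb R^{dm}$ extracts $\int_t^{t+T}\pi_{g_k(\tau)}d\tau\ge \mu I_d$ for every $k$, which by Definition \ref{def:pe} means every $g_k$ is PE.

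The main obstacle, if any, is simply to pin down the surjectivity of $\bar H$ from the acyclic-plus-spanning-tree assumption; once that is in hand, both implications reduce to linear-algebraic manipulations of the block-diagonal $\Pi$ and its quadratic form, with no delicate analytic or uniformity issues beyond choosing a common horizon $T$ in the ``if'' direction.
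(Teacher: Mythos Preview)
Your proposal is correct and follows essentially the same approach as the paper: both proofs hinge on the observation that the tree hypothesis gives $m=n-1$ and $\rank(\bar H)=d(n-1)$, so the Laplacian-type PE inequality $\int L_B\ge\mu L$ is equivalent to the ordinary PE inequality $\int\Pi\ge\mu I_{dm}$, from which block-diagonality of $\Pi$ yields PE of every $g_k$. Your write-up is in fact more explicit than the paper's, which states the surjectivity step only implicitly (``$\Pi(t)$ should satisfy the PE condition''); your handling of the common horizon $T=\max_k T_k$ and $\mu_0=\min_k\mu_k$ in the ``if'' direction is also correct since $\pi_{g_k}\ge 0$ makes the integral monotone in the interval length.
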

\begin{proof}
	Since $\mathcal G$ is acyclic and has a spanning tree, $m=n-1$. According to Definition \ref{bpe}, the proof of the lemma is equivalent to showing that $L_B(t)$ is PE if and only if the corresponding bearing vectors $g_k(t),\ \forall k=\{1,\ldots,n-1\}$ satisfies the PE condition.
	
	If $g_k(t)$ satisfies the PE condition  $\forall k=\{1,\ldots,m\}$, this implies that the matrix $\Pi(t)$ is PE and hence it is obvious to conclude that $L_B(t)$ is PE.
	Conversely, if $L_B(t)$ is PE then there exist $T>0$ and $\mu>0$ such that, $\forall t \ge 0$,  $\int_{t}^{t+T}L_B(\tau)d\tau\ge\mu L$. Now, since the $\bar{H}$ is a constant matrix with $\rank(\bar{H})=d(n-1)$ and $\Pi(t) \in \mathbb{R}^{d(n-1)\times d(n-1)}$ it follows that $\Pi(t) \in \mathbb{R}^{d(n-1)\times d(n-1)}$ should satisfy the PE condition in equation \eqref{eq:pe}.  This in turn implies that each $g_k(t)$ satisfies the PE condition in Definition \ref{def:pe},  $\forall k\in\{1,\ldots,n-1\}$.
	
	%$\rank(\int_t^{t+T}L_B(\tau)d\tau)=dn-d, \, \forall t$.  According to Sylvester inequality, $\rank(\int_t^{t+T}L_B(\tau)d\tau)\le \min\{\rank(\int_t^{t+T}\Pi(\tau)d \tau),\ \rank(\bar H)\}$. Since the graph $\mathcal G$ has a single spanning tree, $\rank(\bar H)=dm=dn-d$. Thus,  matrix $\int_t^{t+T}\Pi(\tau)d \tau$ is full rank and $\rank(\int_t^{t+T}\Pi(\tau)d \tau)=dn-d,\ \forall t$. Due to the fact that $ H^\top\int_t^{t+T}\bar\Pi(\tau))d \tau \bar H \ge \mu \bar H^\top \bar H$, we can conclude that $\int_t^{t+T}\Pi(\tau)d \tau\ge \mu I$ which indicates that $g_k(t), \forall k\in\{1,\ldots,m\}$ satisfies the PE condition.
	
	% Conversely, if the formation has an unique spanning tree and $g_k(t),\ \forall k=\{1,\ldots,m\}$ satisfy the PE condition, $\rank(\int_t^{t+T}D(p(\tau))d \tau)=dn-d$, again by application of Sylvester inequality, we can conclude that $\rank(B(t))=dn-d$.
\end{proof}

\begin{lem}
	Consider a formation $\mathcal{G}(p(t))$ defined in $\mathbb{R}^2$ along with bearing measurements $\{g_k\}_{ k\in\{1\ldots m\}}$ of an arbitrary orientation of the graph. If the formation is BPE, then
	the number of PE bearing vectors, $\bar m$, satisfies the condition:
	\begin{enumerate}
		\item $\bar m\ge 1,\ \text{if} \ m\ge 2n-3$,\\
		\item $\bar m\ge j+1,\ \text{if} \ m=2n-3-j\ ( j\in\{1,\ldots,n-2\}).$\\
	\end{enumerate}
\end{lem}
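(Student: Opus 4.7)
My plan is to prove the lemma by contrapositive, using a dimension-counting argument in $\mathbb R^{2m}$. From the non-PE edges I construct a ``bad'' subspace $V\subset\mathbb R^{2m}$ on which the integrated projection $\int_t^{t+T}\Pi(\tau)d\tau$ cannot be uniformly positive definite; BPE then forces $V\cap\range(\bar H)=\{0\}$, and the resulting dimensional inequality yields the announced lower bound on $\bar m$.

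First, I reformulate BPE as an inequality on $\range(\bar H)$. Since $L=\bar H^\top\bar H$ and $L_B=\bar H^\top\Pi\bar H$, the condition $\int_t^{t+T} L_B(\tau)d\tau\ge \mu L$ is equivalent, via $y=\bar H x$ for $x\in(\Null L)^\perp$, to the existence of $\mu'>0$ with $y^\top\int_t^{t+T}\Pi(\tau)d\tau\,y\ge \mu'\|y\|^2$ for every $t$ and every $y\in\range(\bar H)$. Since $\Pi=\diag(\pi_{g_k})$ this decouples into edgewise terms $y_k^\top\int_t^{t+T}\pi_{g_k(\tau)}d\tau\,y_k$, to which Lemma~\ref{lem:pe_norm} applies block by block. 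For each non-PE edge $k$, compactness of $\mathbb S^1$ and uniform continuity of $g_k$ (guaranteed by the bounded velocity inputs) yield, after a diagonal extraction across the finite set of non-PE indices, a common sequence $t_n\to\infty$ and unit vectors $\xi_k\in\mathbb S^1$ such that $g_k(t_n+\tau)\to\xi_k$ uniformly on $[0,T]$; hence $\int_{t_n}^{t_n+T}\pi_{g_k(\tau)}d\tau\to T\pi_{\xi_k}$, whose kernel is $\Span(\xi_k)$.

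Define $V=\{y\in\mathbb R^{2m}:y_k=0\text{ for PE }k,\ y_k\in\Span(\xi_k)\text{ for non-PE }k\}$, so that $\dim V=m-\bar m$. Along the subsequence $t_n$, every $y\in V\setminus\{0\}$ satisfies $y^\top\int_{t_n}^{t_n+T}\Pi(\tau)d\tau\,y\to 0$, so BPE forces $V\cap\range(\bar H)=\{0\}$. Two separate consequences give the two lower bounds on $\bar m$. (i) The Grassmann dimension formula in $\mathbb R^{2m}$ yields $(m-\bar m)+(2n-2)-2m\le 0$, i.e., $\bar m\ge 2n-2-m$; specialising to $m=2n-3-j$ with $j\in\{1,\ldots,n-2\}$, this reads $\bar m\ge j+1$, which is part 2. (ii) The always-present null vector $\bar H p(t)\in\range(\bar H)\cap\prod_k\Span(g_k(t))$ handles part 1: assuming $\bar m=0$, the subspaces $\prod_k\Span(g_k(t_n))$ converge along $t_n$ to $V=\prod_k\Span(\xi_k)$, so (up to rescaling) the limit of $\bar H p(t_n)$ is a non-zero element of $V\cap\range(\bar H)$, contradicting the previous display. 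Hence $\bar m\ge 1$ as soon as $\mathcal G$ has a spanning tree, which together with (i) covers part 1 for every $m\ge 2n-3$.

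The main technical obstacle is the passage from ``each non-PE bearing is individually near some $\xi_k$ on some interval of length $T$'' to ``all non-PE bearings are simultaneously near their respective $\xi_k$'s on a common interval of length $T$''. This requires combining the compactness of $\mathbb S^1$, uniform continuity of each $g_k$, and finiteness of the edge set through a carefully arranged Arzela--Ascoli / diagonal argument, and is where the bounded-velocity hypothesis must be used to control $\dot g_k$ away from the degenerate regime $\|e_k\|=0$. Once a common bad-time sequence is in hand, both the dimensional count in (i) and the $\bar H p(t)$ argument in (ii) reduce to routine linear algebra.
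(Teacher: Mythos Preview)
Your overall strategy---reduce BPE to a Rayleigh-quotient inequality on $\range(\bar H)$, build a low-eigenvalue subspace from the non-PE edge blocks, and force an intersection with $\range(\bar H)$ by a dimension count---is essentially the same as the paper's, which phrases the identical count through the ordered eigenvalues of $\int_t^{t+T}\Pi(\tau)\,d\tau$ (showing $\lambda_{2n-2}<\mu$ and then intersecting the corresponding eigenspace with $\range(\bar H)$). Your explicit subspace $V$ and Grassmann inequality are just a coordinate-free rewriting of that. For item~1 the paper does not argue directly but simply points back to the `only if' direction of Lemma~\ref{lem:rigid to persistent}; your $\bar Hp(t)$ idea is a neater, self-contained alternative, though it still rests on the same synchronisation step below.

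The genuine gap is the synchronisation of bad times, and your proposed fix does not close it. ``Diagonal extraction across the finite set of non-PE indices'' would work if you started from a \emph{single} sequence $(t_n)$ and wanted finitely many limits to exist along a subsequence. Here, however, non-PE of $g_k$ only furnishes, for each $k$ separately, its own sequence $(t_n^{(k)})$ on which $\int_{t_n^{(k)}}^{t_n^{(k)}+T}\pi_{g_k}$ is nearly degenerate; nothing in the hypotheses forces the bad windows of different edges to overlap. One can easily picture two non-PE bearings whose near-constant intervals alternate, so that at every $t$ at most one block of $\Pi$ is small---then neither your $V$ nor the paper's eigenvalue count produces a nonzero $y\in\range(\bar H)$ with $y^\top(\int\Pi)y<\mu\|y\|^2$. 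Arzel\`a--Ascoli and uniform continuity of $g_k$ do not help: they give convergence along subsequences you already possess, not a common bad-time sequence you do not. The paper's proof glosses over the very same issue (it tacitly assumes one $t$ at which all non-PE blocks have $\lambda_{\min}<\mu$), so you are not worse off than the paper; but you should not claim the obstacle is dispatched by compactness when it is not.
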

\begin{proof}
	The proof of item 1) is similar to the proof of the 'only if' part in Lemma \ref{lem:rigid to persistent}. It has been omitted here to save space. Now, in order to show that item 2) is valid,  we have to verify that inequality \eqref{eq:pe_LB} is satisfied when $\bar m\ge j+1,$ in the case of $m=2n-3-j\ ( j\in\{1,\ldots,n-2\})$. That is there exists $  \mu>0$ and $ T>0$, $\forall t \ge 0$ and $\forall x\in R^{2n}$ such that $\bar{H}x \neq 0$, we have $x^\top \int_t^{t+ T}L_B(\tau)d\tau x\ge  \mu x^\top Lx$ or equivalently $w^\top\int_t^{t+ T}\Pi(\tau)d\tau w\ge \mu \|w\|^2$, with $w=\bar{H} x\in \mathbb R^{2m}$.
	
	We proceed by contradiction. Assume that $\bar m\le j$.
	%and $\exists  \bar\mu>0$ and $\exists \bar T>0$, $\forall t$ and $\forall x$ (or equivalently $w=\bar{H}x\neq 0$), such that $w^\top\int_t^{t+\bar T}\Pi(\tau)d\tau w \ge \bar \mu \|w\|^2$.
	Since we have $m-\bar{m}\geq 2n-2j-3$ non-PE bearings and for each non-PE bearing $g_k$
	%$\forall \mu>0$ and $\forall T>0$, $\exists t=t(\mu,T)$ such that
	there is a
	 $\lambda_{\min}(\int_{t}^{t+T}\pi_{g_k(\tau)}d\tau)<\mu$, it is straightforward to verify that $\lambda_{2n-2}(\int_t^{t+T}\Pi(\tau)d\tau)< \mu$ ($\lambda_i(.)$ represents the $i$th eigenvalue of a symmetric matrix under a non-increasing order).
	
	Now, using the fact that $\rank(\bar H)=2n-2$, we can ensure that if $x=(x_1^\top, \ldots,x_n^\top)$ has $2n$ independent entries (each $x_i \in \mathbb{R}^2$), then there exists a $w=\bar H x$ with $2n-2$ independent entries such that $w^\top\int_t^{t+T}\Pi(\tau)d\tau w <\mu \|w\|^2$, which yields a contradiction.
\end{proof}

Figure \ref{fig:pe_edges} illustrates that when $m\le 2n-3$, the minimal number of PE bearing vectors decreases as the edges number $m$ increases.
%\begin{remark}
%The above Lemma is based on the result that for a bearing rigid formation in $\mathbb R^2$, there exist at least $2n-3$ edges \cite{eren2003sensor}. However, the minimally number of connections for formations in$\mathbb R^3$ is still an open problem.
%\end{remark}

%\begin{definition}[bearing-based Hennerberg Construction]\label{lem:construction}
%	Given a formation $\mathcal{G}(p)=(\mathcal{V},\mathcal{E})$, a new formation $\mathcal{G}^{'}(p^{'})=(\mathcal{V}^{'},\mathcal{E}^{'})$ is formed by adding a new vertex $\nu$ to $\mathcal{G}$ and performing one of the following two operations:\\
%	a) Vertex addition: connect vertex $\nu$ to any existing vertices $j\in \mathcal{V}$ such that $\sum_{j\in \mathcal{N}_{\nu}}\pi_{g_{\nu j}}$ is PE. In this case, $\mathcal{V}^{'}=\mathcal V \cup \{\nu\}$ and $\mathcal{E}^{'}=\mathcal{E}\cup\{(\nu,j)\}, \ j\in \mathcal{N}_{\nu}$.\\
%	b) Edge splitting: consider three vertices $i,j,k\in\mathcal{V}$ with $g_{ij}$ , $\{i,j\}\in\mathcal{E}$ non-PE and connect vertex $\nu$ to $i,j,k$ and delete $\{i,j\}$. In this case, $\mathcal{V}^{'}=\mathcal V \cup \{\nu\}$ and $\mathcal{E}^{'}=\mathcal{E}\cup\{(\nu,i),(\nu,j),(\nu,k)\} \setminus \{\{i,j\}\}.$
%\end{definition}
%\begin{lem}
%	If the given graph $\mathcal{G}$ is PE, then the new graph $\mathcal{G}^{'}$ formed by bearing-based Hennerberg construction defined in Definition \ref{lem:construction} is also PE.
%\end{lem}

\section{Bearing-only formation control}
In this section we will propose a bearing-only formation control law for a multi-agent system provided the desired formation is BPE.

Consider the formation $\mathcal{G}(p)$ defined in Section \ref{subsection:formation control}, where each agent $i\in\mathcal{V}$ is modeled as a single integrator with the following dynamics:
\begin{equation} \label{eq:double integrator}
\dot{p}_i=v_i%,\;i=1,...,n.\end{aligned}
\end{equation}
where $v_i\in\mathbb{R}^d$ is the velocity control input. Similarly, let $p_i^*(t)$ and $v_i^*(t)\in\mathbb{R}^d$ denote the desired position and velocity of the $i$th agent, respectively, and define the desired relative position vectors $p_{ij}^*$ and bearings $g_{ij}^*$, according to \eqref{eq:pij} and \eqref{eq:gij}, respectively. Let $\boldmath{p}^*(t)=[p_1^{*\top}(t),...,p_n^{*\top}(t)]^\top\in \mathbb{R}^{dn}$ be the desired configuration. Let $\{e_{k}^*(t)\}_{k\in\{1,...,m\}}$ and  $\{g_{k}^*(t)\}_{k\in\{1,...,m\}}$ be the set of all desired edge vectors and desired bearing vectors, respectively, under an arbitrary orientation of the graph.

We assume that the $n$-agent system satisfies the following assumptions.

\begin{assumption} \label{ass:construction}
	The sensing topology of the group is described by a undirected graph $\mathcal{G}(\mathcal{V},\mathcal{E})$ which has a spanning tree. Each agent $i\in \mathcal V$ can measure the relative bearing vectors $g_{ij}$ to its neighbors $ j\in \mathcal{N}_i$.
\end{assumption}
\begin{assumption}\label{ass:desired}
	The desired velocities $v_i^*(t)$ ($i\in \mathcal V$) are bounded and known, the resulting desired bearings $g_{ij}^*(t)$ are well-defined and the desired formation $\mathcal G(p^*(t))$ is BPE, for all $t\ge 0$ .
\end{assumption}
\begin{assumption}\label{ass:collision}
As the formation evolves in time, no inter-agent collisions and occlusions occur. In particular, we assume that there exists a positive number $\epsilon$ such that $\forall t \ge 0,\ \|p_{ij}(t)\|>\epsilon,\ \{i,j\}\in\mathcal{E}$.\end{assumption}

With all these ingredients, we can define the bearing-only formation control problem as follows.
\begin{problem}
	Consider the system \eqref{eq:double integrator} and the underlying formation $\mathcal{G}(p)$. Under Assumptions \ref{ass:construction}-\ref{ass:collision}, design distributed control laws based on bearing measurements that guarantee exponential stabilization of the actual formation to the desired one up to a translational vector.
%	 {\color{blue} $p$ exponentially converges to $p^*$ up to translation, uniformly in $t$}.
\end{problem}

\subsection{A bearing-only control law}

For each agents $i\in\mathcal{V}$, define the position errors $\tilde{p}_{i}:=p_{i}-p_{i}^*$ along with the following kinematics:
\begin{equation} \label{eq:states_f}
\dot{\tilde{p}}_{i}=v_i-v_i^*
\end{equation}
Consider the following control law for each agent $i\in\mathcal{V}$
\begin{equation}
v_i=-k_p\sum_{j\in \mathcal{N}_i}\pi_{g_{ij}}p_{ij}^*
+v_i^*. \label{eq:ui}
\end{equation}
where $k_p$ is a positive gain. Let $\tilde{p}:=p-p^*$ be the configuration error. Using the control law \eqref{eq:ui} for $i\in\mathcal V$, one gets:
\begin{equation}\label{tildep}
\dot{ \tilde{p}}=-k_pL_B(t)\tilde{p}.
\end{equation}

\begin{lem} \label{lem:invariant}
Consider the configuration error $\tilde{p}$ governed by \eqref{tildep}, then the relative centroid vector $\bar p:=\frac{1}{n}U^\top \tilde p$ is invariant.
\end{lem}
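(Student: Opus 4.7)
The plan is to show that $\dot{\bar p} \equiv 0$ by differentiating the definition of $\bar p$ and using the dynamics \eqref{tildep} together with a basic algebraic identity relating the incidence matrix $H$ and the vector of ones $1_n$.

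First, I would differentiate $\bar p = \tfrac{1}{n} U^\top \tilde p$ with respect to time and substitute \eqref{tildep} to obtain
\begin{equation*}
\dot{\bar p} = \frac{1}{n} U^\top \dot{\tilde p} = -\frac{k_p}{n} U^\top L_B(t)\tilde p.
\end{equation*}
The whole statement then reduces to verifying the identity $U^\top L_B(t) = 0$ for all $t$.

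To establish this identity, I would use the factorization $L_B = \bar H^\top \Pi \bar H$ from \eqref{eq:bearingL} together with $U = 1_n \otimes I_d$ and $\bar H = H \otimes I_d$. Applying the mixed-product rule for the Kronecker product gives
\begin{equation*}
U^\top \bar H^\top = (1_n \otimes I_d)^\top (H \otimes I_d)^\top = (1_n^\top H^\top) \otimes I_d.
\end{equation*}
Since $\mathcal G$ is connected (or at least has a spanning tree, cf. Assumption \ref{ass:construction}), the incidence matrix satisfies $H 1_n = 0$, hence $1_n^\top H^\top = 0$, so $U^\top \bar H^\top = 0$ and consequently $U^\top L_B(t) = 0$ for every $t$ regardless of $\Pi(t)$.

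Combining the two steps yields $\dot{\bar p} \equiv 0$, so $\bar p(t) = \bar p(0)$ for all $t \ge 0$. There is no real obstacle here: the result is essentially a restatement of the fact that the bearing Laplacian inherits the kernel direction $U$ from the underlying scalar graph Laplacian through the incidence-matrix relation $H1_n=0$, which is precisely the property that makes formation control laws built from relative quantities preserve the centroid in inertial coordinates.
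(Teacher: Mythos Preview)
Your proof is correct and follows essentially the same approach as the paper: differentiate $\bar p$, substitute the closed-loop dynamics, and invoke $U^\top L_B(t)=0$, which the paper states as $\Span\{U\}\subset\Null(L_B(t))$. Your explicit verification via $U^\top\bar H^\top=(1_n^\top H^\top)\otimes I_d=0$ simply unpacks that fact; note that $H1_n=0$ holds for any incidence matrix (each row sums to zero), so the appeal to connectivity is unnecessary there.
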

\begin{proof}
	The derivative of the relative centroid vector is
	\begin{equation}
	\dot {\bar p} =U^\top\dot {\tilde{p}}/n=-\frac{k_p}{n}U^\top L_B(t)\tilde{p}\equiv 0, \label{eq:dotscale}
	\end{equation}
	due to the fact that $\Span\{U\}\subset\Null(L_B(t))$. Thus the relative centroid $\bar p(t)$ is invariant respect to $\bar p(0)$.
	\end{proof}

\subsection{Exponential stabilization of the formations}
\begin{thm} \label{lem:2agent}
	Consider the error dynamics \eqref{eq:states_f} along with the control law \eqref{eq:ui}.
	If the Assumptions \ref{ass:construction}-\ref{ass:collision} are satisfied, then the equilibrium point $\tilde{p}=\frac{1}{n}UU^\top \tilde{p}(0)$ is exponentially stable (ES).
\end{thm}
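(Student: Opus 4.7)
The plan is to reduce the closed-loop error equation to a linear time-varying system by linearization around the target equilibrium, and then invoke the BPE hypothesis on the \emph{desired} formation. Lemma \ref{lem:invariant} already ensures that the relative centroid $\bar p = \frac{1}{n}U^\top\tilde p$ is invariant, so I would introduce the shifted error $\zeta := \tilde p - \frac{1}{n}UU^\top\tilde p(0)$, which satisfies $U^\top\zeta(t)\equiv 0$. Since $\Span\{U\}\subset\Null(L_B(t))$ for every bearing Laplacian, the dynamics \eqref{tildep} transfers verbatim to $\zeta$, namely $\dot\zeta = -k_p L_B(t)\zeta$, with $L_B(t)$ depending on $\zeta$ through the \emph{actual} bearings $g_{ij}(t)$. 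Exponential stability of the claimed equilibrium is then equivalent to exponential stability of $\zeta=0$ restricted to the subspace $\{U^\top\zeta=0\}$.

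The second step would be to establish boundedness and regularity. Taking $V = \frac{1}{2}\|\zeta\|^2$, positive semidefiniteness of $L_B(t)$ gives $\dot V = -k_p\zeta^\top L_B(t)\zeta \leq 0$, so $\|\zeta(t)\|\leq \|\zeta(0)\|$ for all $t\geq 0$. Together with Assumption \ref{ass:collision}, this keeps every actual bearing $g_{ij}(t)$ well defined and smooth in $\zeta$ along the closed-loop trajectory, so that the right-hand side $F(t,\zeta) := -k_p L_B(t,\zeta)\zeta$ is $C^1$ in $\zeta$ with Jacobian bounded uniformly in $t$ on a neighbourhood of the origin.

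The crux is to transfer the BPE property of the desired formation to the error dynamics through linearization at $\zeta=0$. Since $g_{ij}(t,0) = g_{ij}^*(t)$, the Jacobian of $F$ at $\zeta=0$ equals $-k_p L_B^*(t)$, where $L_B^*(t)$ is the bearing Laplacian of the desired formation, so the linearized dynamics reads $\dot\zeta = -k_p L_B^*(t)\zeta$. By Assumption \ref{ass:desired}, $\int_t^{t+T} L_B^*(\tau)\,d\tau \geq \mu L$ for some $T,\mu>0$, and on the subspace $\{U^\top\zeta=0\}$ this yields a uniform lower bound by $\mu\lambda_{dn-d} I$, with $\lambda_{dn-d}>0$ the smallest positive eigenvalue of $L$. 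A classical PE exponential-stability lemma for dissipative linear time-varying systems (for instance \cite[Lemma 5]{loria2002uniform}, exactly as used in the proof of Theorem \ref{thm:shape}) would then deliver uniform exponential stability of this linearization.

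To close, I would appeal to the standard result that uniform exponential stability of the linearization at a time-varying equilibrium of a sufficiently regular nonlinear system implies local uniform exponential stability of the equilibrium itself, which is applicable here thanks to the regularity established in the second step. The main obstacle is precisely the PE-transfer step: the hypothesis is phrased on the \emph{desired} bearing Laplacian, whereas the closed-loop dynamics is driven by the \emph{actual} one, and the two coincide only at $\zeta=0$. This is why linearization -- and hence a purely local exponential stability conclusion -- is the natural vehicle, consistent with the theorem statement claiming exponential stability (ES) rather than its global counterpart.
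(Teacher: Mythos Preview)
Your plan is sound and would deliver local exponential stability: the linearization of $\dot\zeta=-k_pL_B(p(t))\zeta$ at $\zeta=0$ is indeed $\dot\zeta=-k_pL_B^*(t)\zeta$ (the $\partial_\zeta L_B$ term is multiplied by $\zeta$ and vanishes at the origin), and the BPE assumption on the desired formation together with \cite[Lemma~5]{loria2002uniform} yields uniform exponential stability of that linear time-varying system on $\{U^\top\zeta=0\}$. A standard Lyapunov linearization theorem then closes the argument locally.

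The paper, however, does \emph{not} linearize. It exploits the algebraic identity
\[
e_k^{*\top}\pi_{g_k}e_k^{*}\;=\;\frac{\|e_k^{*}\|^2}{\|e_k\|^2}\,\tilde e_k^{\top}\pi_{g_k^{*}}\tilde e_k,
\]
(which follows from $\pi_{g_k}e_k=0$ and $\pi_{g_k^{*}}e_k^{*}=0$) to rewrite the derivative of the very same $V=\tfrac12\|\zeta\|^2$ as $\dot V\le -\gamma\,\zeta^\top L_B^{*}(t)\zeta$, where $\gamma>0$ depends on $\min_k\|e_k^{*}\|$, $\|\bar H\|$, and $\|\zeta(0)\|$ but not on $\zeta$ being small. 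The Lori\'a-type PE argument is then applied directly to the \emph{nonlinear} system, yielding exponential convergence from any initial condition compatible with Assumption~\ref{ass:collision}, with a rate that degrades gracefully with $\|\zeta(0)\|$. In short, your linearization route is correct but intrinsically local; the paper's identity swaps actual and desired bearings at the level of the Lyapunov derivative and therefore transfers the BPE hypothesis to the closed loop without any small-error restriction.
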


\begin{proof}
	Define a new variable $\delta:=\tilde{p}-\frac{1}{n}UU^\top \tilde{p}(0)$ and
a Lyapunov function:
\begin{equation}
\begin{aligned}
\mathcal{L}&=\frac 1 2 \|\delta\|^2,
\end{aligned}\label{eq:L}
\end{equation}
with the following time derivative negative semi-definite ($L_B(p(t))\geq 0$)
\begin{equation}
\begin{aligned}
\dot{\mathcal{ L}} &=-k_p\delta^\top L_B(p(t))\delta.
\end{aligned}\label{eq:dotL_semi}
\end{equation}
Thus we can conclude that $\delta(t)$ is bounded. Equation \eqref{eq:dotL_semi} can be represented as
\begin{equation}\small
\begin{aligned}
\dot{\mathcal{ L}} &=-k_p\tilde{p}^\top L_B(p(t))\tilde{p}=-k_p\sum_{k=1}^m e_k^{*\top}\pi_{g_{k}}e_k^*\\
&=-k_p\sum_{k=1}^m\frac{\|e_k^*\|^2}{\|e_k\|^2}\tilde{e}_k^{\top}\pi_{g_{k}^*}\tilde{e}_k\le-\gamma\delta^\top L_B(p^*(t))\delta
\end{aligned}\label{eq:dotLre}
\end{equation}
where $\tilde{e}_k:=e_k-e_k^*$ and $\gamma= k_p\frac{\min_{k=1,\dots,m}\|e_k^*(t)\|^2}{\|\bar H\|^2(\|\delta(0)\|+\|U^\top \tilde{p}(0)\|+\max\|p^*(t)\|)^2}>0$.
From now on, we use a similar argument as shown in proof of \cite[Lemma 5]{loria2002uniform}. Take the integral of \eqref{eq:dotLre}, we have
\begin{equation}
\mathcal{L}(t+T)-\mathcal{L}(t)\le-\gamma\int_{t}^{t+T}\|L_B(p^*(\tau))^{\frac 1 2}
\delta(\tau)\|^2 d \tau \label{eq:dotL_int}
\end{equation}
where the solution
\begin{equation}
\delta(\tau)=\delta(t)-\int_{t}^{\tau}L_B(p(s))\delta(s)ds. \label{eq:dote_int}
\end{equation}
	 Substituting \eqref{eq:dote_int} in $\|L_B(p^*(\tau))^{\frac 1 2}\delta(\tau)\|^2$ and use $\|L_B(p^*(\tau))^{\frac 1 2}\|\le \|\bar H\|$, $\|a+b\|^2\ge[\rho/(1+\rho)]\|a\|^2-\rho \|b\|^2$ and Schwartz inequality, we obtain
%		\begin{equation}\small \label{sigma_bound}
%		\begin{aligned}
%			\|L_B(p^*(\tau))^{\frac 1 2}\delta(\tau)\|^2&\ge\frac{\rho}{1+\rho} \|L_B(p^*(\tau))^{\frac 1 2}\delta(t)\|^2 \\&-\rho Tc\|\bar H\|\int_t^\tau \|L_B(p(s))\delta(s)\|^2ds.
%		\end{aligned}
%	\end{equation}
%	 \eqref{sigma_bound} into  \eqref{eq:dotL_int}
\begin{equation}\small
\begin{aligned}
&\mathcal{L}(t+T)-\mathcal{L}(t)\le
-\frac{\rho\gamma}{1+\rho}\delta(t)^\top \int_t^{t+T}L_B(p^*(\tau))d\tau\delta(t)\\
&+\gamma\rho  T \|\bar H\|
\int_t^{t+T}\int_t^{\tau} \|L_B(p(s))\delta(s)\|^2ds d\tau.
\end{aligned}
\end{equation}
Using the conclusion of Lemma \ref{lem:invariant} together with the assumption that the desired formation is BPE, we can conclude that $\delta(t)^\top\int_t^{t+T}L_B(p^*(\tau))d\tau\delta(t)\ge\mu \delta(t)^\top\bar H^\top \bar H \delta(t)\ge\mu \lambda_{dn-d} \|\delta(t)\|^2$ (recall that $\lambda_{dn-d}$ is the smallest positive eigenvalue of $\bar H ^\top \bar H$). Hence, we have
\begin{equation}\small
\begin{aligned}
&\mathcal{L}(t+T)-\mathcal{L}(t)\le-\gamma\mu\lambda_{dn-d}\rho/(1+\rho)\|\delta(t)\|^2\\
&+\gamma\rho T \|\bar H\|^3 \int_{t}^{t+T}\int_{t}^{\tau}\|L_B(p(s))^{\frac 1 2}\delta(s)\|^2 ds d\tau.
\end{aligned}
\end{equation}
Using the similar argument as shown in \cite[Lemma 5]{loria2002uniform}, one gets
\begin{equation*}
\frac{\rho\gamma\mu\lambda_{dn-d}}{1+\rho}\|\delta(t)\|^2\le(1+\gamma\rho T^2\|\bar H\|^3)[\mathcal{L}(t)-\mathcal{L}(t+T)].
\end{equation*}
Which in turn implies that for $\sigma:=\frac{2\rho\gamma\mu\lambda_{dn-d}}{(1+\rho)(1+\gamma\rho T^2\|\bar H\|^3)}$ and any $\rho$ such that $0<\sigma<1$, one has:
\begin{equation}
\mathcal{L}(t+T)-\mathcal{L}(t)\le-\sigma\mathcal{L}(t),\\
\end{equation}
and  hence $\delta=0$ is exponentially stable.\\
\end{proof}
\begin{remark}
	Assumption \ref{ass:collision} requested in the above Theorem relies on the evolution of state variables. It serves to show that if there is no collision or occlusion, the bearings are well-defined and the proposed control design yields the desired convergence properties. Trying to more specifically characterize the set of all initial conditions for which the system's solutions avoid collision and occlusion is out of the scope of the present paper.
\end{remark}

\section{Simulation Results}\label{sec:sim}
In this section, simulation results are provided to illustrate the effectiveness of the proposed control law for a four-agent system in both 2-D and 3-D space. In 2-D space, we consider a relaxed bearing rigid desired formation under the graph topology which has a single spanning tree. The desired four agents form a squared shape in $\mathbb{R}^2$ that rotates about its center in the meanwhile translating along $x$-axis (Fig.~\ref{fig:2D}) such that,
%such that $\mathcal N_1=\{2\},\ \mathcal N_2=\{1,3\}, \ \mathcal N_3=\{2,4\}$ and $\mathcal N_4=\{3\}$,
the desired positions are  $p_i^* (t) = R(t)^\top p_i^*(0)+[t/10 \ 0]^\top$, with \scalebox{0.9}{$R(t)=\begin{bmatrix}
	\cos(\frac {\pi} {6}t)& -\sin(\frac {\pi} {6}t) \\ \sin(\frac  {\pi} {6}t) &\cos(\frac  {\pi} {6}t)
	\end{bmatrix}$}, $p_1^*(0)=[0\ 1]^\top,p_2^*(0)=[1\ 0 ]^\top, p_3^*(0)=[0\ -1]^\top$ and $p_4^*(0)=[-1 \ 0]^\top$
%The desired trajectories for the agents are $p_i^*(t)=[\sin(\psi_i^*(t))+s(t) \ \cos(\psi_i^*(t))]^\top, \ i=1,\dots, 4$ with $\psi_i^*(t)=\frac \pi 6 t-(i-1)\frac{\pi}{2}$ and $s(t)=t/10$.
The initial conditions are chosen such that $U^\top \tilde{p}(0)=0$ (the initial centroid coincides with the initial centroid of the desired formation): $p_1(0)=[1\ 1]^\top$, $p_2(0)=[-1\ 2 ]^\top$, $p_3(0)=[1\ -1 ]^\top$, and $p_4(0)=[-1\ -2 ]^\top$, which implies that the convergence of $\delta=0$ ensures the convergence of $\tilde{p}$ to $0$ (by Lemma \ref{lem:invariant}). The chosen gain is $k_p=1$. Fig. \ref{fig:2D} depicts the three snapshots of the agents during time evolution of the formation and it shows that the four agents converge to the desired trajectories. Fig. \ref{fig:error1} shows the time evolution  and the convergence of $\|\tilde{p}(t)\|$ to $0$ and, hence we can conclude that the formation achieves to the desired shape and scale without need of bearing rigidity.

In 3-D space, we consider a desired formation such that the four agents form a pyramid shape in $\mathbb{R}^3$, that rotates about agent 1. The graph is such that $N_i=\{j\in\mathcal{V}: j\ne i\}$. The desired position of the agents are
 $p_i^* (t) = R(t)^\top p_i^*(0)$, with \scalebox{0.8}{$R(t)=\begin{bmatrix}
	\cos(\frac {\pi} {6}t)& -\sin(\frac {\pi} {6}t) &0\\ \sin(\frac  {\pi} {6}t) &\cos(\frac  {\pi} {6}t)& 0\\1 &0& 0
	\end{bmatrix}$}, $p_1^*(0)=[0\ 0\ 0]^\top,p_2^*(0)=[1\ 0\ 0 ]^\top, p_3^*(0)=[0.5\ -\sqrt{3}/2\ 0]^\top$ and $p_4^*(0)=[\sqrt{3}/2 \ -0.5 \ 1]^\top$.
%are chosen such that $p_1^*=[0\ 0\ 0]^\top$, $p_i^* (t) = 1.5 [ \sin(\psi_i^*(t))\ \cos(\psi_i^*(t))\ z_i^*]^\top,\ i=2,\ldots,4$, with $\psi_2^*(t) = \frac{\pi}{6}t,\ \psi_3^*(t) = \frac{\pi}{6}t-\pi/3,\ \psi_4^*(t) = \frac{\pi}{6}t - \pi/6,\ z_2^*=z_3^*=0$, and $z_4^*=1$.
 The initial conditions are $p_1(0)=[1\ 1 \ 0]^\top$, $p_2(0)=[-1\ 2 \ 1]^\top$, $p_3(0)=[-2\ 0 \ -1]^\top$, and $p_4(0)=[-1\ 2 \ 2]^\top$. The gain is $k_p=0.5$. Fig.~\ref{fig:3D} depicts evolution of the formation and Fig.~\ref{fig:error2} shows the convergence of $\delta$ to 0.
\begin{figure}[!htb]
	\centering
	\includegraphics[scale = 0.54]{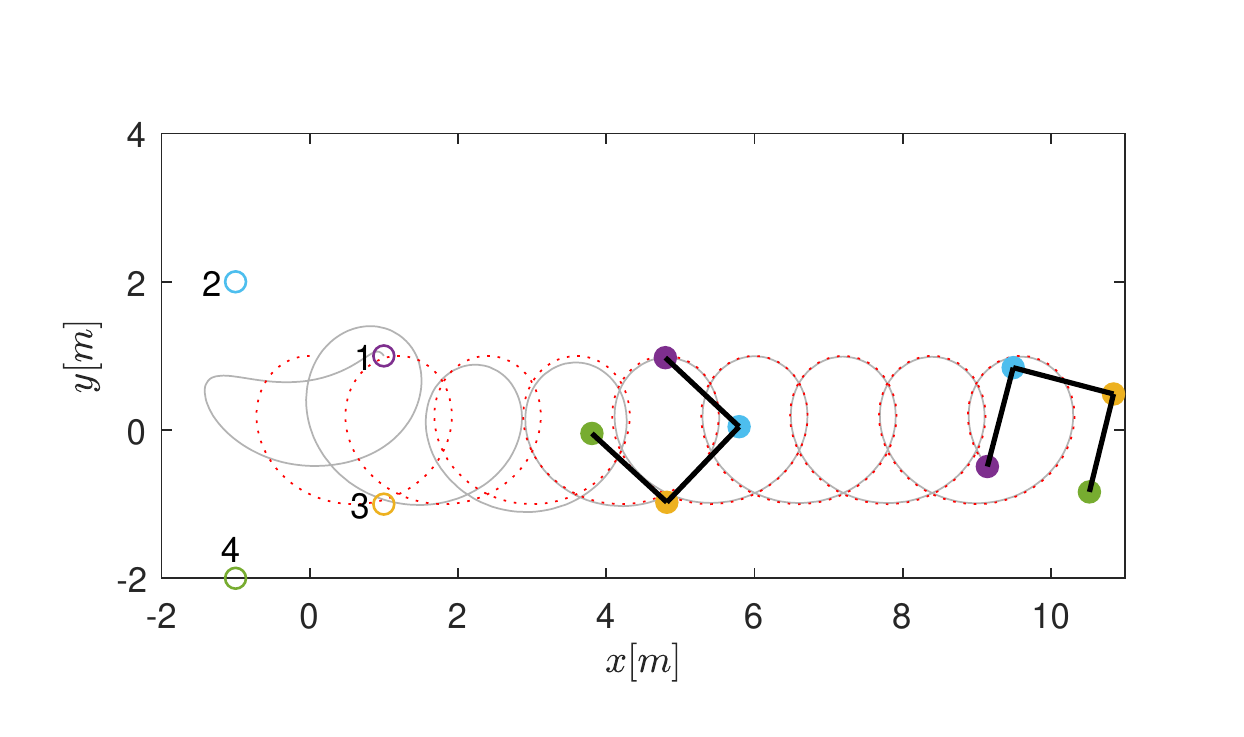}	
	\caption{The figure shows three snapshots: the initial conditions (void circles); $t=50$s, when agents converge to the desired formation; $t=100$s, when agents move along the desired trajectories. The gray solid line represents the trajectory of agent $1$, the dash red line is the desired trajectory of agent $1$ and the black solid lines represent the connections between agents.}
	\label{fig:2D}
\end{figure}
\begin{figure}[!htb]
	\centering
	\includegraphics[scale = 0.44]{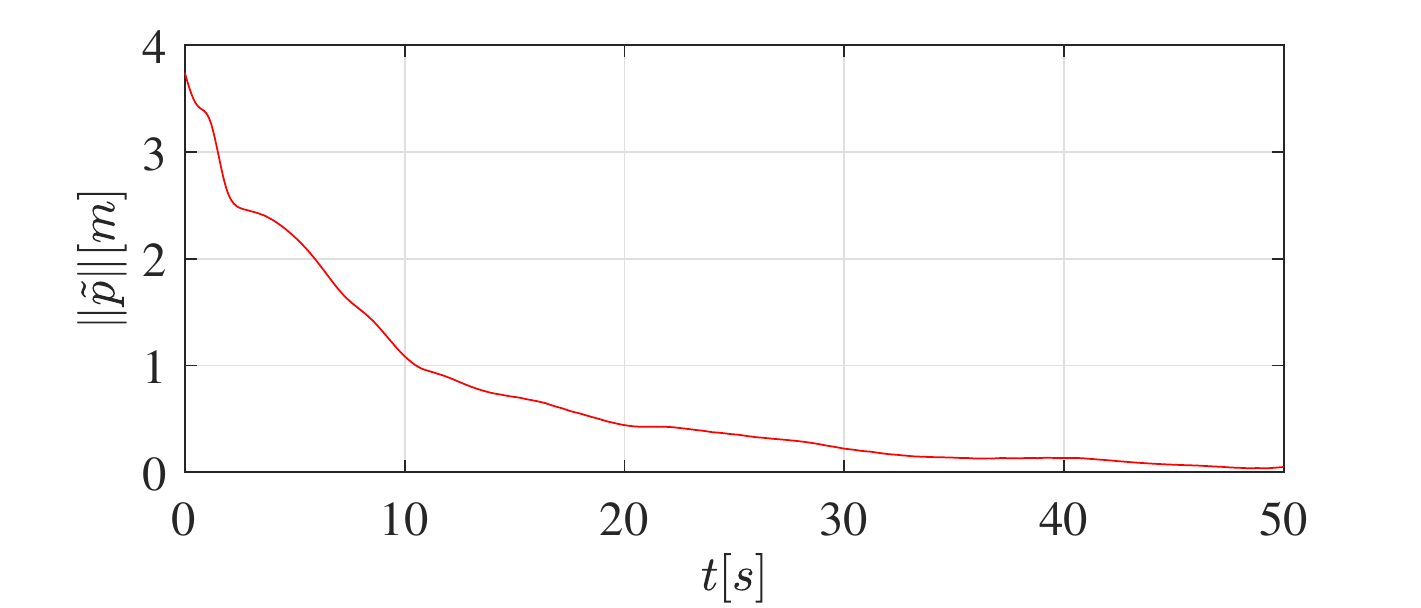}	
	\caption{Time evolution of the norm of the relative position error ($\|\tilde{p}\|$). }
	\label{fig:error1}
\end{figure}
\begin{figure}[!htb]
	\centering
	\includegraphics[scale = 0.54]{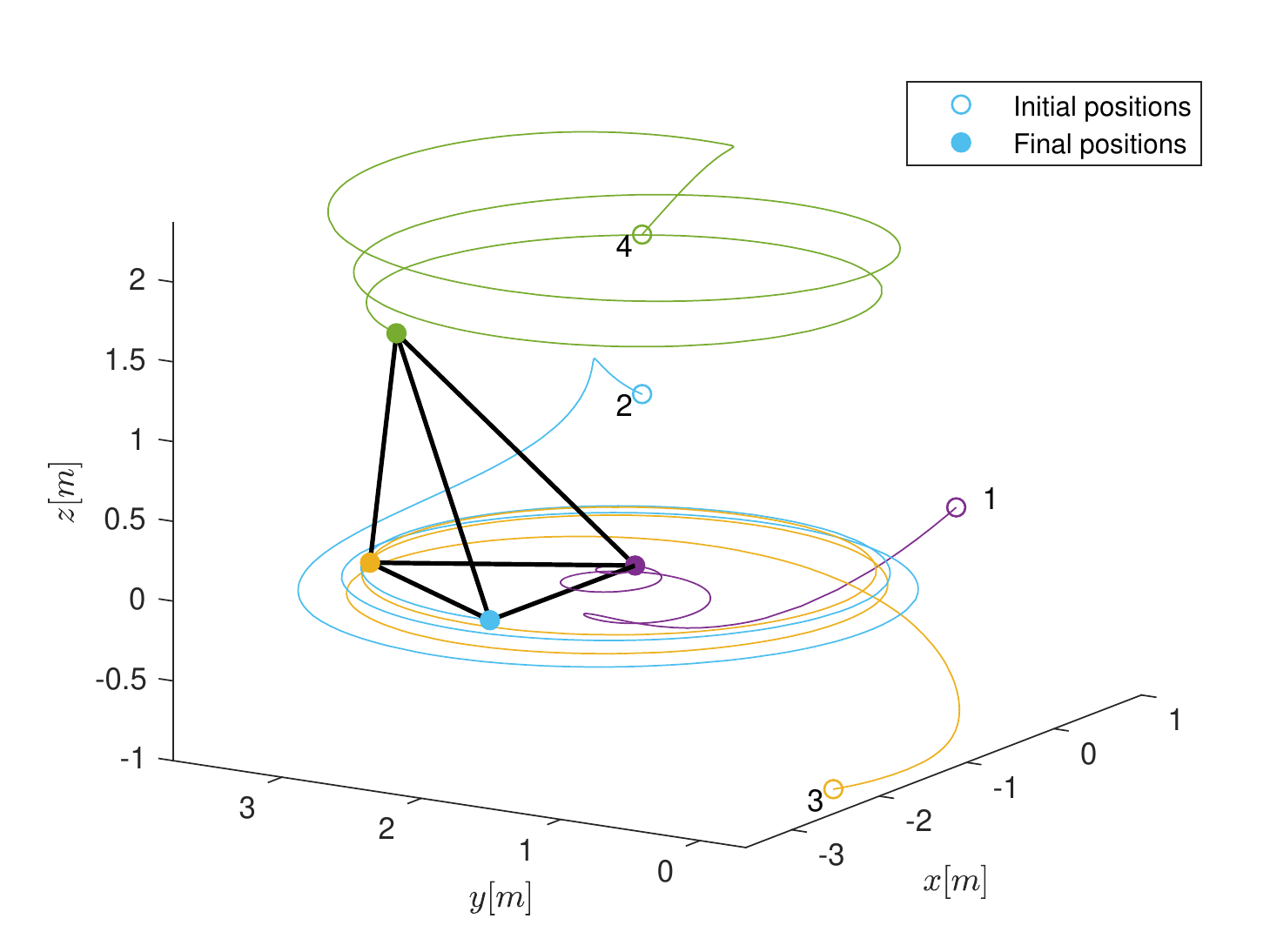}	
	\caption{3-D Trajectory of a pyramid formation. Colorized solid lines represent the agents' trajectory. The black solid lines represent the connections between agents.}
	\label{fig:3D}
\end{figure}
\begin{figure}[!htb]
	\centering
	\includegraphics[scale = 0.44]{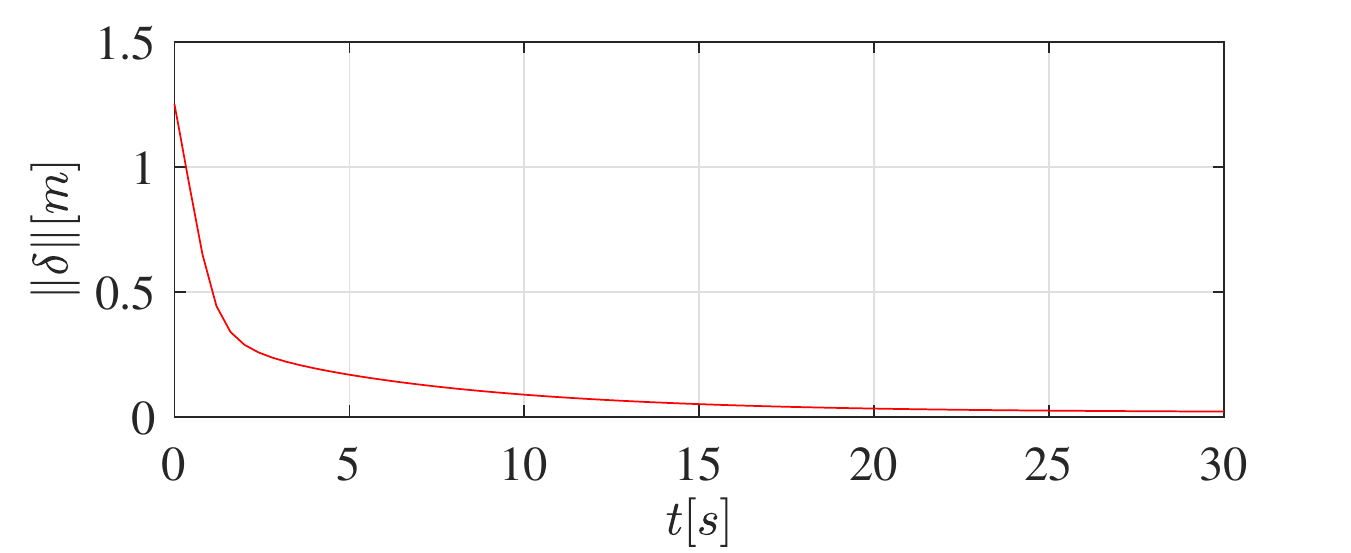}	
	\caption{Time evolution of the norm of the error ($\|\delta\|$). }
	\label{fig:error2}
\end{figure}
%\begin{figure}[!htb]
%\centering
%\includegraphics[scale = 0.3]{image/3D3_arrow}	
%\caption{3-D Trajectory of a formation with time-varying shape. Blue, yellow and green solid lines represent the agents' trajectory and the red lines represent the desired trajectories of agent 2. The black lines represent the connections between agents.}
%\label{fig:3D3}
%\end{figure}
%\begin{figure}[!htb]
%\centering
%\includegraphics[scale = 0.45]{image/error3}	
%\caption{Time evolution of total relative position error $\|\tilde{p}\|$ and total relative velocity error $\|\tilde{v}\|$. }
%\label{fig:error3}
%\end{figure}
%\begin{figure}[!htb]
%	\centering
%	\includegraphics[scale = 0.45]{image/errorP4}	
%	\caption{Time evolution of position error $\|p_4-p_4^*\|$ and velocity error $\|v_4-v_4^*\|$ under different neighbor set $\mathcal{N}_4$ and in the presence of noise.}
%	\label{fig:errorP4}
%\end{figure}
\section{Conclusion}
This paper presents new results on formation control of kinematic systems based on time-varying bearing measurements. The key contribution is to show that if the desired formation is bearing persistently exciting,
% the problem of bearing formation control in two or three dimensional space by exploring persistence of excitation (PE) of the desired bearing reference, which is provided by a moving and/or time-varying desired formation. By exploiting bearing persistence of excitation of the desired formation,
relaxed conditions on the interaction topology (which do not require bearing rigidity) can be used to derive distributed control laws that guarantee exponential stabilization of the desired formation only up to a translation vector. Simulations results are provided to illustrate the performance of the proposed control method. The future work is to extend these results to account for dynamics (double integrator systems) and include inter-agent collision avoidance.
%\appendix

%\section{CONCLUSIONS}
%
%A conclusion section is not required. Although a conclusion may review the main points of the paper, do not replicate the abstract as the conclusion. A conclusion might elaborate on the importance of the work or suggest applications and extensions.

\addtolength{\textheight}{-12cm}   % This command serves to balance the column lengths
% on the last page of the document manually. It shortens
% the textheight of the last page by a suitable amount.
% This command does not take effect until the next page
% so it should come on the page before the last. Make
% sure that you do not shorten the textheight too much.

%%%%%%%%%%%%%%%%%%%%%%%%%%%%%%%%%%%%%%%%%%%%%%%%%%%%%%%%%%%%%%%%%%%%%%%%%%%%%%%%

%%%%%%%%%%%%%%%%%%%%%%%%%%%%%%%%%%%%%%%%%%%%%%%%%%%%%%%%%%%%%%%%%%%%%%%%%%%%%%%%

%%%%%%%%%%%%%%%%%%%%%%%%%%%%%%%%%%%%%%%%%%%%%%%%%%%%%%%%%%%%%%%%%%%%%%%%%%%%%%%%
%\section*{APPENDIX}
%
%Appendixes should appear before the acknowledgment.
%
%\section*{ACKNOWLEDGMENT}
%
%The preferred spelling of the word ÒacknowledgmentÓ in America is without an ÒeÓ after the ÒgÓ. Avoid the stilted expression, ÒOne of us (R. B. G.) thanks . . .Ó  Instead, try ÒR. B. G. thanksÓ. Put sponsor acknowledgments in the unnumbered footnote on the first page.
%
%
%
%%%%%%%%%%%%%%%%%%%%%%%%%%%%%%%%%%%%%%%%%%%%%%%%%%%%%%%%%%%%%%%%%%%%%%%%%%%%%%%%%
%
%References are important to the reader; therefore, each citation must be complete and correct. If at all possible, references should be commonly available publications.

\bibliography{undirected_graph_Final20200828}
\bibliographystyle{IEEEtran}

\end{document}